\documentclass[12pt]{article}
\textheight 20truecm
\textwidth 15truecm
\hoffset -1.5truecm
%\usepackage{amsmath}
%\usepackage{amsthm}
%\newtheorem{prop}{Proposition}[section]

%Qiang's stuff
%\usepackage[psamsfonts]{amsfonts}
\usepackage{appendix}
\usepackage{amsmath, amsthm, amssymb}
\usepackage[mathscr]{euscript}
\usepackage{verbatim}
\numberwithin{equation}{section}
\usepackage{hyphenat}

\usepackage{enumerate}
\usepackage{setspace}
\usepackage{graphicx}
\usepackage{natbib}
\usepackage[bf, footnotesize]{caption2}

\newcommand{\derv}[1]{\frac{\partial}{\partial #1}}

\newcommand{\deriv}[2]{\frac{\partial #1}{\partial #2}}

\newcommand{\beqn}{\begin{equation}}
\newcommand{\eeqn}{\end{equation}}
\newcommand{\beqnar}{\begin{eqnarray}}
\newcommand{\eeqnar}{\end{eqnarray}}

\newtheorem{theorem}{Theorem}[section]

\newtheorem{proposition}[theorem]{Proposition}

\newenvironment{remark}[1][Remark]{\begin{trivlist}
\item[\hskip \labelsep {\bfseries #1}]}{\end{trivlist}}

\title{{Rossby Wave Green's Functions in an Azimuthal Wind}\\ 
\author{G.M. Webb${}^1$ 
  C. T. Duba${}^{2}$ 
  and 
Q. Hu${}^{1,3}$\\
\mbox{}\\
%\mbox{}\\
${}^1$CSPAR, The University of Alabama in Huntsville,\\ Huntsville,
AL 35805, USA\\
\mbox{}\\
%\mbox{}\\
${}^2$Department of Mathematics and Statistics,\\
Durban University of Technology,\\ Steve Biko Campus, Durban 4001 South Africa\\
\mbox{}\\
%\mbox{}\\
${}^3$Department of Space Science, The University of Alabama in Huntsville,
\\ Huntsville,
AL 35899, USA}}
%${}^2$Department of Mathematics, the University of Kwa-Zulu Natal,\\ 
%Durban Westville Natal, South Africa\\
%${}^2$School of Chemistry and Physics, The University of Kwa-Zulu Natal,\\
%Durban Westville, Durban Natal, South Africa}}
%\mbox{}\\
%\mbox{}\\
%${}^4$Department of Mathematics and Statistics,\\ 
%Durban University of Technology,\\ Steve Biko Campus, Durban 4001 South Africa\\
%and School of Mathematical Sciences,\\ 
%University of Kwa-Zulu, Natal, Durban, South Africa}}
\date{Submitted to Geophysical and Astrophysical Fluid Dynamics, \today}
%\centerline{To be submitted to Geophysical and Astrophysical Fluid Dynamics}
\begin{document}
\maketitle

\begin{abstract}
Green's functions for Rossby waves in an azimuthal wind are obtained, in which 
the stream-function $\psi$ depends on $r$, $\phi$ and $t$, where $r$ 
is cylindrical radius and $\phi$ is the azimuthal angle in the $\beta$-plane 
relative to the easterly direction, in which the $x$-axis points east 
and the $y$-axis points north. 
The Rossby wave Green's function with no wind is obtained 
using Fourier transform methods, and is related to the previously known 
Green's function obtained for this case, which has a different 
but equivalent form to the Green's function obtained in the present paper. 
We emphasize the role of the wave eikonal solution, which plays an 
important role in the form of the solution.  
%(previous works, as far as we are aware assume time harmonic
%source term in the stream function ($\psi$) wave equation, whereas we use a delta function source in 
%both space and time). 
The corresponding Green's function for a rotating wind with azimuthal 
wind velocity ${\bf u}=\Omega r{\bf e}_\phi$ ($\Omega=$const.) is also obtained by Fourier methods, 
in which the advective rotation operator in position space is transformed to a rotation 
operator in ${\bf k}$ transform space. The finite Rossby deformation radius is included in the analysis.
The physical characteristics of the Green's functions are delineated
and applications are discussed. 
In the limit as $\Omega\to 0$, the rotating wind Green's function reduces to the 
Rossby wave Green function with no wind. 
\end{abstract}

\section{Introduction}
Rotating planetary atmospheres admit the propagation of a range of different wave modes.
At frequencies $\omega$ above the acoustic-gravity wave cut-off 
$\Omega_s=c_s/(2h)$ ($c_s$ is the adiabatic sound speed and 
$h$ is the gravitational scale height),  the equations admit high frequency acoustic-gravity 
waves. Below the acoustic cut-off, but above the Brunt-V{\"a}is{\"a}l{\"a}  
frequency $N$, the waves are evanescent. Below the 
Brunt frequency $N$, and above the Coriolis frequency $f_0$, there exist dispersive, anisotropic, 
inertial-gravity waves.  Rossby waves propagate at frequencies ($\omega<f_0$) below the 
Coriolis frequency $f_0$ 
(e.g. Eckart (1960), Pedlosky (1987)). 

Rossby waves on a rotating planet are described in classical texts by Gill (1982), 
Pedlosky (1987) and Vallis (2006). Rossby waves are planetary scale waves, that arise 
from the latitudinal variation of the vertical component of the Coriolis force, 
known as the $\beta$-effect, which is closely related to the concept of 
the conservation of potential vorticity (e.g. Vallis (2006), p. 178-183). The dispersion 
and anisotropic propagation of these waves are best understood by using the wave normal curve in wave number 
space (${\bf k}=(k_x,k_y)^T$-space) at a fixed frequency $\omega$ (Longuet-Higgins (1964), Lighthill (1978))
 which consists of a circle in ${\bf k}$-space,
with center displaced westward ($k_x<0$) along the $k_x$-axis, 
with center $(k_x,k_y)=(-\beta/(2\omega),0)$, and 
with diameter $\beta/\omega$. The phase velocity of the wave is westward, but the group velocity 
can be both eastward or westward, depending on the wave number 
(e.g. Duba and McKenzie (2012), McKenzie (2014)). Duba et al. (2014) describe Rossby wave patterns in 
zonal and meridional winds. 

McKenzie and Webb (2015) investigated Rossby waves in a rotating wind. They obtained analytical 
solutions for the waves, in which  the stream function $\psi$ depends on $(r,\phi,t)$ 
where $r$ is cylindrical radius, $\phi$ is the azimuthal angle measured in the $\beta$-plane 
relative to the easterly direction (in the horizontal $\beta$-plane, the $x$-axis points east, and 
the $y$-axis points north). They investigated solutions for $\psi(r,\phi,t)$ that are $2\pi$-periodic, in $\phi$,
and consist of a superposition of Bessel functions of the form:
\begin{equation}
\psi(r,\phi,t)=\sum_{n=-\infty}^\infty a_n J_n(kr)\exp[i(\omega t-n\phi)]. \label{eq:1.1}
\end{equation}
It was shown that solutions exist in which the expansion coefficients $\{a_n\}$ ($n$ an integer), 
satisfy three-term recurrence relations. The recurrence relations were analyzed by using 
Fourier-Floquet analysis, which shows that the solutions for $\psi$ can be interpreted in terms of 
three-wave interactions. The recurrence relations have solutions in terms of Bessel functions 
in which the $a_n\propto J_{n-{\bar\omega}}(a)$ where ${\bar\omega}=\omega/\Omega$ and $a=\beta/(k\Omega)$
where $\Omega$ is the angular velocity of the wind about the local vertical $z$-axis. The solutions for the $\{a_n\}$
when substituted in (\ref{eq:1.1}) give rise to Neumann series that can be summed up to give the solution 
for  $\psi$ in terms of a single Bessel function. The solutions were used to illustrate the properties
of Rossby waves in a rotating wind. In the long wavelength limit these solutions reduce to the classical 
Rossby wave solution with westward phase velocity, $V_p=-\beta k_x/k^3$, where $k^2=k_x^2+k_y^2$ 
and with dispersion equation $\omega=-\beta k_x/k^2$.  

In this paper, we revisit the problem of  Rossby waves in a rotating wind 
investigated by McKenzie and Webb (2015), 
by using a direct Fourier transform approach. 
We obtain the Rossby wave Green's function for a Dirac delta distribution 
source $Q=N\delta(x)\delta(y)\delta(t)$ in the stream function equation for Rossby waves, 
where $N$ is a normalization constant.   
 We study both the case of Rossby waves in the absence of a rotating wind (i.e. $\Omega=0$) 
 and also the case of a rotating wind. Veronis (1958) obtained the Green's function for the case 
of no azimuthal wind ($\Omega=0$) 
and also  included the effect of a finite Rossby deformation radius (the Rossby 
deformation radius is given by $R_d=(gH)^{1/2}/f_0$ where $H$ is the depth of the fluid, $g$ 
is the acceleration due to gravity and $f_0$ is the Coriolis frequency).  
 Veronis (1958) only investigated in detail, the case $R_d>>L$ where $L$ is the scale length of 
waves of interest, for which the effects of $R_d$ can be neglected.  We obtain a different,  
but equivalent form of the Green's function obtained by Veronis. Our form of the Green's function applies 
both in the case $k_d=1/R_d\neq 0$ and for the case $k_d=0$. Veronis (1958) only investigated in detail
the case $k_d=0$.  The wave eikonal (i.e. the wave dispersion 
equation, written as a first order partial differential equation for the wave phase $S$) 
plays an important role in the analysis. This is to be expected from the analysis of the group velocity 
of the waves based on the wave normal diagram and the method of stationary phase, which is used 
for example by McKenzie (2014) in the analysis of Rossby waves for a finite Rossby deformation radius. 
  Rhines (2003) discusses the Green's function 
for a harmonic time source
located at $(x,y)=(0,0)$, and refers to a 
more complete analysis of Rossby wave Green's functions by Dickinson (1968, 1969a,b).

The basic Rossby wave model is described in Section 2. 
In Section 3, we derive the Rossby wave Green's function for the 
case of no wind ($\Omega=0$) and with source term $Q\propto\delta(x)\delta(y)\delta(t)$. 
In Section 4, this solution is 
generalized to the case of a Rossby wave in a rotating wind, with azimuthal wind velocity 
${\bf U}=\Omega r {\bf e}_\phi$,  
 where ${\bf e}_\phi$ is a unit vector in the azimuthal direction. 
 The solutions in Section 3 and Section 4 are obtained by Fourier transform methods. 
Section 5 presents illustrative examples and applications of the solutions are discussed. 
Appendix A discusses solutions of the wave eikonal equation for Rossby waves in the non-rotating 
wind case. The wave eikonal equation is the nonlinear first order partial differential equation 
that results from setting $\omega=-S_t$ and ${\bf k}=\nabla S$ in the Rossby wave dispersion 
equation. Nonlinear solutions of the wave eikonal equation are identified in the  
Green's function solution obtained by Fourier analysis. Appendix B and Appendix C describe the 
relationship between the Veronis (1958)  Green's function and the present analysis. 
Section 6 concludes with a summary and discussion.

%section2
\section{The Model}
We use the classical linearized Rossby wave equation for the stream function $\psi(x,y,t)$ 
on a $\beta$-plane, which has the form:
\begin{equation}
\left(\derv{t}+{\bf U}{\bf\cdot}\nabla\right)(\nabla_\perp^2\psi-k_d^2\psi)
+\beta\deriv{\psi}{x}=Q, \label{eq:2.1}
\end{equation}
where
\begin{equation}
\zeta=(\nabla\times{\bf u}){\bf\cdot}{\bf e}_z=\nabla_\perp^2\psi=\psi_{xx}+\psi_{yy}, \label{eq:2.2}
\end{equation}
is the fluid vorticity in the local vertical direction in the $\beta$-plane. 
 In the $\beta$-plane approximation, the  $z$-axis is the local vertical direction perpendicular to the 
Earth's surface. The $\beta$-parameter is related to the Coriolis parameter $f$ via the equations:
\begin{equation}
f=2\Omega_E\sin\theta_0+\beta y,\quad  \beta=2\frac{\Omega_E}{R}\cos\theta_0, \label{eq:2.3}
\end{equation}
where the $\beta$-plane is centered on latitude $\theta_0$ on a planet of radius $R$, rotating with an angular 
frequency $\Omega_E$ and the $x$ and $y$ coordinates are local Cartesian coordinates pointing east 
and north respectively. The $\beta$-effect is maximal at the equator ($\theta=0$) 
and zero at the poles ($\theta=\pm\pi/2$). The local fluid velocity perturbation 
\begin{equation}
{\bf u}=(u,v,0)={\bf e}_z\times\nabla\psi=(-\psi_y,\psi_x,0). \label{eq:2.3a}
\end{equation}
The wavenumber $k_d$ in (\ref{eq:2.1}) is the inverse of the Rossby deformation radius $R_d$, i.e. 
\begin{equation}
k_d=\frac{1}{R_d},\quad R_d=\sqrt{gH/f_0^2}=\frac{c}{f_0}\quad\hbox{where}\quad c=\sqrt{gH}, \label{eq:2.3b}
\end{equation}
is the shallow water speed, $g$ is the acceleration due to gravity,
 $f_0=2\Omega_E\sin\theta_0$ is the Coriolis parameter, 
 and $H$ is the effective fluid depth in a shallow water approximation (see e.g. Veronis 1958; 
Vallis 2006; Pedlosky 1987). The stream function $\psi$ is related to the deviation of the height 
$\eta$ by the equation $\eta=-(f_0/g)\psi$.  In  (\ref{eq:2.1}) the geostrophic approximation is used, 
and the geostrophic response time is assumed to be much larger than a half-pendulum day, i.e. 
 $\eta_x>>(1/f_0) \partial^2\eta/\partial t\partial y$
where $\eta$ is the variable fluid depth (e.g. Veronis, 1958). McKenzie (2014) gives a more general version
of the Rossby wave equation (\ref{eq:2.1}) in terms of $\eta$ which includes the topographic contribution  
to the $\beta$ effect. This amounts to replacing $\beta\psi_x$ by 
$(\boldsymbol{\beta}\times\nabla\psi){\bf\cdot}{\bf e}_z$ in (\ref{eq:2.1}) where 
\begin{equation}
\boldsymbol{\beta}=\boldsymbol{\beta}_t+\boldsymbol{\beta}_c,\quad 
 \quad \boldsymbol{\beta}_t=-f_0\left(\frac{1}{c^2}\deriv{c^2}{x},\frac{1}{c^2}\deriv{c^2}{y}\right),
\quad \boldsymbol{\beta}_c=\left(0,\deriv{f}{y}\right), 
\label{eq:2.3c}
\end{equation}
in which $c_o=\sqrt{gH}$ is the shallow water speed at the $o$ reference level.   

We assume that the background wind velocity ${\bf U}$ in the $\beta$-plane is azimuthal and has the form:
\begin{equation}
{\bf U}=U_\phi {\bf e}_\phi=r\Omega(r) {\bf e}_\phi, \label{eq:2.4}
\end{equation}
where ${\bf e}_\phi=(-\sin\phi,\cos\phi,0)^T$ is the unit vector in the azimuthal direction. We also use 
the Cartesian coordinates $x$ and $y$, where
\begin{equation}
x=r\cos\phi, \quad y=r\sin\phi. \label{eq:2.5}
\end{equation}

The main aim of the present paper is to determine the Green's function of (\ref{eq:2.1}) 
both for the case $\Omega\neq 0$  and also for the case $\Omega=0$ (no-wind) case. The no wind 
case ($\Omega=0$) was   
investigated by Veronis (1958). The Green's function for $\Omega=0$ which we obtain has a different 
form than that obtained by Veronis (1958). 
The Green's function solutions of (\ref{eq:2.1}) have source term  $Q=N\delta(x)\delta(y)\delta(t)$ on the 
right-hand side. For the case of a constant angular velocity $\Omega=const.$, the Rossby wave equation (\ref{eq:2.1})
can be written in the form:
\begin{equation}
\left[\derv{t}+\Omega\left(-y\derv{x}+x\derv{y}\right)\right]\left(\psi_{xx}+\psi_{yy}-k_d^2\psi\right) 
+\beta \psi_x=Q, \label{eq:2.6}
\end{equation}
where
\begin{equation}
Q=N \delta(x)\delta(y)\delta (t), \label{eq:2.7}
\end{equation}
is the source term for the Green's function solution where $N$ is a normalization constant. 
 Veronis (1958) obtained the Green's function 
with a delta function source term for the case $\Omega=0$ for the deviation of the height 
$\eta$ of the fluid layer. 
An alternative source term:
\begin{equation}
Q=Q_H=N \delta(x)\delta(y)\exp(-i\omega t), \label{eq:2.8}
\end{equation}
is sometimes used (e.g. Rhines (2003)), in order to determine the characteristics of 
Rossby waves with driving frequency $\omega$. Our main emphasis is on Green's function solutions 
with $\delta$-function source term (\ref{eq:2.7}). 

An alternative form of the Rossby wave equation (\ref{eq:2.1}) or (\ref{eq:2.6}) is to use cylindrical 
polar coordinates $(r,\phi)$ to describe the $\beta$-plane. In terms of $(r,\phi,t)$ the Rossby wave equation has 
the form:
\begin{equation}
\left(\derv{t}+\Omega\derv{\phi}\right)\left[\frac{1}{r}\derv{r}\left(r\deriv{\psi}{r}\right)
+\frac{1}{r^2}\frac{\partial^2\psi}{\partial\phi^2}-k_d^2\psi\right]
+\beta\left(\cos\phi\deriv{\psi}{r}-\frac{\sin\phi}{r}\deriv{\psi}{\phi}\right)=Q. \label{eq:2.9}
\end{equation}
We restrict our analysis to the $\Omega=const.$ case. 
The Rossby wave equation (\ref{eq:2.9}) with $k_d=0$ was used by McKenzie and Webb (2015) 
in their work on Rossby waves in a rotating wind.

\section{Green's function, with no wind ($\Omega=0$)}
%section3
\begin{proposition}\label{prop31}
The Green's function solution of the Rossby wave equation (\ref{eq:2.6}) for the case of no azimuthal wind, 
with $\Omega=0$ and with a Dirac delta function source (\ref{eq:2.7}) is given by 
the formula:
\begin{equation}
\psi_G=-\frac{N}{2\pi}\int_0^\infty \frac{k}{k^2+k_d^2} J_0(A)\ dk, \label{eq:nowind1}
\end{equation}
where $J_0(A)$ is a Bessel function of the first kind of order zero. The argument $A$ is given by:
\begin{equation}
A=\left[\left(k x+\frac{\beta k  t}{k^2+k_d^2}\right)^2+k^2 y^2\right]^{1/2}. 
\label{eq:nowind2}
\end{equation}
\end{proposition}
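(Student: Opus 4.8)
The plan is to solve equation (\ref{eq:2.6}) with $\Omega=0$ by Fourier transforming in $x$, $y$ and $t$ simultaneously. With the convention
\[
\psi(x,y,t)=\frac{1}{(2\pi)^3}\int\hat\psi(k_x,k_y,\omega)\,e^{i(k_xx+k_yy-\omega t)}\,dk_x\,dk_y\,d\omega,
\]
and noting that the transform of $N\delta(x)\delta(y)\delta(t)$ is the constant $N$, the substitutions $\partial_t\to-i\omega$, $\partial_x\to ik_x$, $(\psi_{xx}+\psi_{yy}-k_d^2\psi)\to-(k^2+k_d^2)\hat\psi$ with $k^2=k_x^2+k_y^2$ turn (\ref{eq:2.6}) into the algebraic relation $i[\omega(k^2+k_d^2)+\beta k_x]\hat\psi=N$, i.e.
\[
\hat\psi=\frac{N}{i(k^2+k_d^2)(\omega-\omega_0)},\qquad \omega_0({\bf k})=-\frac{\beta k_x}{k^2+k_d^2},
\]
with $\omega_0$ the classical Rossby wave dispersion relation.

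Next I would carry out the $\omega$-integral by residues, selecting the retarded (causal) branch by displacing the pole at $\omega=\omega_0$ to $\omega_0-i0$ (equivalently, letting the inversion contour pass above it). Then $\psi_G\equiv0$ for $t<0$, while for $t>0$ the factor $e^{-i\omega t}$ forces closure in the lower half $\omega$-plane, and the single simple pole gives
\[
\psi_G=-\frac{N}{(2\pi)^2}\iint\frac{1}{k^2+k_d^2}\,\exp\!\Bigl[i\Bigl(k_xx+k_yy+\frac{\beta k_x t}{k^2+k_d^2}\Bigr)\Bigr]\,dk_x\,dk_y,
\]
which is the asserted formula for $t>0$ (for the full retarded kernel one multiplies by the step function $H(t)$). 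The phase here is exactly $k_xx+k_yy-\omega_0({\bf k})t$, the plane-wave eikonal evaluated on the dispersion surface, which is why the wave eikonal shapes the form of the answer.

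I would then pass to polar coordinates $k_x=k\cos\theta$, $k_y=k\sin\theta$ in ${\bf k}$-space, so that $dk_x\,dk_y=k\,dk\,d\theta$. Collecting the $\cos\theta$ and $\sin\theta$ contributions, the phase becomes $k\cos\theta\,(x+\beta t/(k^2+k_d^2))+k\sin\theta\,y=A\cos(\theta-\theta_\ast)$ for a suitable angle $\theta_\ast$, where $A$ is precisely the quantity in (\ref{eq:nowind2}). The elementary integral $\int_0^{2\pi}e^{iA\cos(\theta-\theta_\ast)}\,d\theta=2\pi J_0(A)$ then produces (\ref{eq:nowind1}) after the factors of $2\pi$ cancel.

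The step I expect to be the main obstacle is making the contour argument precise: $\hat\psi$ has a pole on the real $\omega$-axis, so the Fourier inversion is only meaningful once a causality prescription is fixed, and one then has to verify that the resulting double integral --- hence the single integral in (\ref{eq:nowind1}) --- converges. It does: for $k_d\neq0$ the weight $k/(k^2+k_d^2)$ is bounded and the $A^{-1/2}$ decay of $J_0(A)$ controls large $k$, while for $k_d=0$ the apparent $1/k$ behaviour at $k=0$ is offset by $A\to\infty$ there. Finally, running the steps backwards confirms that the $\psi_G$ so constructed satisfies (\ref{eq:2.6}) in the distributional sense and vanishes for $t<0$.
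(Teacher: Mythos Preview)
Your proposal is correct and follows essentially the same route as the paper: Fourier transform in $(x,y,t)$, evaluate the $\omega$-integral by residues with a causal prescription, pass to polar coordinates in ${\bf k}$-space, and reduce the angular integral to $2\pi J_0(A)$. The only cosmetic differences are a sign convention (the paper uses $e^{i(\omega t-{\bf k}\cdot{\bf x})}$ in the inverse transform and therefore closes in the upper half $\omega$-plane), and that the paper obtains $J_0(A)$ via the Jacobi--Anger expansion $\exp(iz\sin\theta)=\sum_n J_n(z)e^{in\theta}$ rather than invoking the integral representation $\int_0^{2\pi}e^{iA\cos\theta}\,d\theta=2\pi J_0(A)$ directly; these are equivalent.
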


\begin{proof}
To solve the Rossby wave equation (\ref{eq:2.6}) with source term 
$Q=N \delta(x)\delta(y)\delta(t)$ and with $\Omega=0$ (no wind), we introduce the Fourier transform:
\begin{equation}
\bar{\psi}={\cal F}(\psi)=
\int_{-\infty}^\infty dt\int_{-\infty}^\infty dx\int_{-\infty}^\infty dy 
\exp\left[i({\bf k}{\bf\cdot}{\bf x}-\omega t)\right]\psi({\bf x},t). \label{eq:3.1}
\end{equation}
Taking the transform of (\ref{eq:2.6}) and noting that 
\begin{equation}
{\cal F}(\psi_t)=i\omega\bar{\psi},\quad {\cal F}(\psi_x)=-ik_x\bar{\psi}, 
\quad {\cal F}(\psi_y)=-ik_y\bar{\psi}, \label{eq:3.2}
\end{equation}
we obtain the Fourier transform equation:
\begin{equation}
-i\left[\omega \left(k^2+k_d^2\right)+k_x\beta\right]\bar{\psi}=N, \label{eq:3.3}
\end{equation}
with solution:
\begin{equation}
\bar{\psi}=\frac{iN }{[\omega (k^2+k_d^2)+k_x\beta]}\quad \hbox{where}\quad k^2=k_x^2+k_y^2. 
\label{eq:3.4}
\end{equation}

Fourier inversion of (\ref{eq:3.4}) gives 
\begin{equation}
\psi({\bf x},t)=\frac{N}{(2\pi)^3}
\int_{-\infty}^\infty d\omega\int_{-\infty}^\infty dk_x
\int_{-\infty}^\infty dk_y\frac{i \exp[i(\omega t-{\bf k}{\bf\cdot x})]}
{[\omega (k^2+k_d^2)+k_x\beta]}, \label{eq:3.5}
\end{equation}
as the required solution for $\psi({\bf x},t)$. 

Using cylindrical polar coordinates in $(x,y)$ and $(k_x,k_y)$ space, i.e.
\begin{equation}
(x,y)=r(\cos\phi,\sin\phi),\quad (k_x,k_y)=k(\cos\Phi,\sin\Phi), \label{eq:3.6}
\end{equation}
the solution (\ref{eq:3.5}) takes the form:
\begin{equation}
\psi({\bf x},t)=\frac{i N}{(2\pi)^3}\int_{-\infty}^\infty d\omega 
\int_0^\infty dk \frac{k}{k^2+k_d^2} \int_0^{2\pi} d\Phi 
\frac{\exp[i(\omega t-k r\cos(\Phi-\phi))]}{(\omega-\omega_p)}, \label{eq:3.7}
\end{equation}
where 
the integrand in (\ref{eq:3.7}) has a pole in the complex $\omega$-plane at
\begin{equation}
\omega=\omega_p=-\frac{\beta k\cos\Phi}{k^2+k_d^2}. \label{eq:3.8}
\end{equation}
Because $t>0$, is required by causality, the contour integral in (\ref{eq:3.7}) must be closed in 
the $\rm{Im}(\omega)>0$ half plane and the contour is deformed below the pole on the $\rm{Re}(\omega)$ 
axis (causality requires outgoing waves, by the Lighthill causality condition). Using the Residue theorem 
we obtain:
\begin{equation}
\psi=-\frac{N}{4\pi^2}\int_0^\infty dk\ \frac{k}{k^2+k_d^2} \int_0^{2\pi}d\Phi 
\exp\left(-i\left[\frac{\beta k \cos\Phi}{k^2+k_d^2} t+k r\cos(\Phi-\phi)\right]\right). 
\label{eq:3.9}
\end{equation}
The argument of the exponential function in (\ref{eq:3.9}) can be written in the form: 
\begin{align}
\zeta=&-i\left[\frac{\beta (k\cos\Phi) t}{k^2+k_d^2}
+kr\left[\cos\Phi\cos\phi+\sin\Phi\sin\phi\right]\right]\nonumber\\
=&-i\left[\left(kx+\frac{\beta kt}{k^2+k_d^2}\right)\cos\Phi+ky\sin\Phi\right]\nonumber\\
=&-i A\sin(\Phi+\epsilon) \label{eq:3.10}
\end{align}
where
\begin{equation}
A\sin\epsilon=k x+\frac{\beta k t}{k^2+k_d^2},\quad A\cos\epsilon=k y. \label{eq:3.11}
\end{equation}
Without loss of generality, we assume $A>0$. Solving (\ref{eq:3.11}) for $A$ and $\epsilon$ we obtain:
\begin{align}
A=&\left[\left(k x+\frac{\beta k t}{k^2+k_d^2}\right)^2 
+k^2 y^2\right]^{1/2}, \nonumber\\
\tan\epsilon=&\frac{k x+\beta k t/(k^2+k_d^2)}{k_\perp y}. \label{eq:3.12}
\end{align}

Using (\ref{eq:3.10})-(\ref{eq:3.11}) in (\ref{eq:3.9}) gives:
\begin{equation}
\psi=-\frac{N}{4\pi^2}\int_0^\infty dk \frac{k}{k^2+k_d^2} \int_0^{2\pi}d\Phi \exp[iA\sin(-\Phi-\epsilon)].
\label{eq:3.13}
\end{equation}

Using the Bessel function generating expansion:
\begin{equation}
\exp(iz\sin\theta)=\sum_{n=-\infty}^\infty \exp(in\theta) J_n(z), \label{eq:3.14}
\end{equation}
(Abramowitz and Stegun (1965): set $\tau=\exp(i\theta)$ in formula 9.1.41 p. 361). In the application of 
(\ref{eq:3.14}) to (\ref{eq:3.13}) we set $\theta=-\Phi-\epsilon$. Carrying out the integral 
over $\Phi$, only the $n=0$ term survives and we get:
\begin{equation}
\psi=\psi_G=-\frac{N}{2\pi} \int_0^\infty dk \frac{k}{k^2+k_d^2} J_0(A), \label{eq:3.15}
\end{equation}
as the Rossby wave Green's function $\psi_G$ 
where $A$ is given by (\ref{eq:3.12}). 
\end{proof}

The Green's function solution $\psi_G$ can be written in the form:
\begin{equation}
\psi_G=-\frac{N}{2\pi} \int_0^\infty dk \frac{k}{k^2+k_d^2} \psi^A\quad\hbox{where}\quad \psi^A=J_0(A). 
\label{eq:3.17}
\end{equation}
One can show that $\psi^A$  satisfies the 
Rossby wave equation (\ref{eq:2.1}) with $U=0$, i.e. 
\begin{equation}
\derv{t}\left(\psi^A_{xx}+\psi^A_{yy}-k_d^2\psi^A\right)+\beta \deriv{\psi^A}{x}=0. \label{eq:3.18}
\end{equation}
Assuming that it is valid to interchange the order of differentiation and integration 
in (\ref{eq:3.15}) it follows that $\Psi_G$ also satisfies the Rossby wave equation (\ref{eq:3.18}) 
(a special analysis is obviously needed near the source point). 

Consider the limit as the Rossby deformation radius $R_d\to\infty$ and $k_d\to 0$. 
In this limit the integrand in (\ref{eq:3.15}) is still integrable.  In the limit as $k\to 0$, 
$A\to\infty$ and
\begin{equation}
\frac{J_0(A)}{k}\sim \left(\frac{2}{\pi A_0}\right)^{1/2}
\cos\left(\frac{A_0}{k}-\frac{\pi}{4}\right)\frac{1}{\sqrt{k}}\quad 
\hbox{where}\quad A_0=\sqrt{\beta t}. \label{eq:3.19}
\end{equation}
Thus $|J_0(A)/k|\sim 1/\sqrt{k}$ which is integrable as $k\to 0$. 

\subsection{Wave Eikonal}
In this section we point out that the phase $A$ of the Bessel function $J_0(A)$ in the 
Green's function (\ref{eq:3.1}) for a fixed $k$ can be related to the dispersion equation 
for Rossby waves:
\begin{equation}
D(\omega,{\bf k})=\omega(k_x^2+k_y^2+k_d^2)+k_x\beta=0. \label{eq:3.21}
\end{equation}
In the JWKB approximation (e.g. Whitham (1974)), it is customary to introduce the wave phase or 
wave eikonal $S({\bf x},t)$ such that 
\begin{equation}
\omega=-S_t\quad \hbox{and}\quad {\bf k}=\nabla S. \label{eq:3.22}
\end{equation}
Using the identifications (\ref{eq:3.22}), the Rossby wave dispersion equation reduces 
to a first order partial differential equation for $S$:
\begin{equation}
S_t\left(S_x^2+S_y^2+k_d^2\right)-\beta S_x=0, \label{eq:3.23}
\end{equation}
which is known as the wave eikonal equation. An obvious plane wave solution of (\ref{eq:3.23})
is the plane wave solution:
\begin{equation}
S=(k_x x+k_y y)-\omega t, \label{eq:3.24}
\end{equation}
(Note that we could equally well define the wave phase as $S=\omega t-{\bf k}{\bf\cdot}{\bf x}$
by changing $\omega\to -\omega$ and ${\bf k}\to -{\bf k}$). 
Substitution of the solution ansatz (\ref{eq:3.24}) into the eikonal equation (\ref{eq:3.23})
gives the dispersion equation (\ref{eq:3.21}) for Rossby waves. The solution (\ref{eq:3.24})
is a linear solution of the wave eikonal equation. However, the nonlinear eikonal equation 
(\ref{eq:3.23}) also has nonlinear solutions which can be thought of as envelope solutions 
of the family of plane wave solutions (\ref{eq:3.24}) (e.g. Sneddon (1957), Courant and Hilbert (1989)). 
Below 
we show that the phase $A$ of $J_0(A)$ in the solution (\ref{eq:nowind1}) is a nonlinear 
solution of the wave eikonal equation (\ref{eq:3.23}).  

\begin{proposition}
The argument $A$ of the Green's function (\ref{eq:nowind1})-(\ref{eq:nowind2}) 
is a nonlinear solution of the wave eikonal equation (\ref{eq:3.23}).
\end{proposition}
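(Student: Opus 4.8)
The plan is to verify directly, by differentiation, that $S=A$ satisfies the first-order PDE (\ref{eq:3.23}), treating $k$ throughout as a \emph{fixed} parameter (the integration variable in (\ref{eq:nowind1})) rather than a function of $(x,y,t)$. First I would introduce the shorthand $P=kx+\beta k t/(k^2+k_d^2)$ and $R=ky$, so that $A=(P^2+R^2)^{1/2}$. The chain rule then gives $A_x=kP/A$, $A_y=k^2y/A=kR/A$, and $A_t=\beta kP/[(k^2+k_d^2)A]$, all valid wherever $A\neq0$.

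The key algebraic step is the observation that
\[
A_x^2+A_y^2=\frac{k^2P^2+k^2R^2}{A^2}=\frac{k^2(P^2+R^2)}{A^2}=k^2,
\]
i.e.\ $|\nabla A|=k$ is constant; in this sense $A$ behaves like an eikonal ``optical distance'' for the fixed wavenumber $k$. Consequently $A_x^2+A_y^2+k_d^2=k^2+k_d^2$, and
\[
A_t\left(A_x^2+A_y^2+k_d^2\right)=\frac{\beta kP}{(k^2+k_d^2)A}\,(k^2+k_d^2)=\frac{\beta kP}{A}=\beta A_x ,
\]
so that $A_t(A_x^2+A_y^2+k_d^2)-\beta A_x=0$, which is precisely (\ref{eq:3.23}). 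Since $A$ is not of the affine form (\ref{eq:3.24}) --- it is a genuinely nonlinear (square-root) function of $(x,y,t)$ --- this establishes that $A$ is a nonlinear solution of the wave eikonal equation.

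I would also include the conceptual reason this works, tying back to the derivation of (\ref{eq:3.15}). For fixed $k$, the phase appearing in the integrand of (\ref{eq:3.9})--(\ref{eq:3.10}) is $P\cos\Phi+R\sin\Phi$, which for each $\Phi$ is an affine (plane-wave) solution of (\ref{eq:3.23}) with ${\bf k}=k(\cos\Phi,\sin\Phi)$ and $\omega=\omega_p(\Phi)$ given by (\ref{eq:3.8}). Imposing $\partial/\partial\Phi=0$ on this one-parameter family --- exactly the stationary-phase condition --- yields $\pm A$, so $A$ is the envelope of this complete integral of (\ref{eq:3.23}); by the classical theory of first-order PDEs (Sneddon (1957), Courant and Hilbert (1989)) such an envelope is again a solution. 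This is not needed for the computation but explains the structure and connects to the stationary-phase/group-velocity interpretation emphasized throughout the paper.

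There is no substantive obstacle here: the proof is a short direct computation. The only points requiring care are (i) not differentiating $k$, since it is the fixed integration variable, (ii) the algebraic collapse $A_x^2+A_y^2=k^2$, which is what makes the two $\beta$-terms cancel, and (iii) noting that the verification holds wherever $A\neq0$, the locus $A=0$ (the source point) requiring separate treatment, as already flagged for $\psi_G$ itself.
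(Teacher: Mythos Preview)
Your proposal is correct and follows essentially the same route as the paper: compute $A_x$, $A_y$, $A_t$ directly, observe the key identity $A_x^2+A_y^2=k^2$, and substitute to verify (\ref{eq:3.23}); the paper's proof is identical up to notation (it writes $\tilde\beta=\beta k/(k^2+k_d^2)$ and records the intermediate relation $S_t=(\tilde\beta/k)S_x$). Your added envelope remark closely parallels the Remark the paper places immediately after the proof, and your caveat that the verification holds for $A\neq 0$ is a reasonable refinement.
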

\begin{proof}
We set 
\begin{equation}
S^2=A^2=\left(k x+\tilde{\beta} t\right)^2+k^2 y^2, \label{eq:3.25}
\end{equation}
where
\begin{equation}
\tilde{\beta}(k)=\frac{\beta k}{k^2 +k_d^2}. \label{eq:3.26}
\end{equation}
Differentiation of (\ref{eq:3.25}) gives the equations:
\begin{equation}
S_t=\frac{(k x+\tilde{\beta} t)\tilde{\beta}}{S}, 
\quad S_x=\frac{k(kx+\tilde{\beta} t)}{S}, \quad 
S_y=\frac{k^2 y}{S}. \label{eq:3.27}
\end{equation} 
From (\ref{eq:3.27}) we obtain:
\begin{equation}
S_t-\frac{\tilde{\beta}}{k} S_x=0,\quad S_x^2+S_y^2=k^2. \label{eq:3.28}
\end{equation}
Combining (\ref{eq:3.28}) gives:
\begin{equation}
S_t-\frac{\beta S_x}{(S_x^2+S_y^2+k_d^2)}=0, \label{eq:3.29}
\end{equation}
which shows that (\ref{eq:3.25}) satisfies the wave eikonal equation (\ref{eq:3.23}). 
\end{proof}

\begin{remark}
We derive the solution (\ref{eq:3.25}) of the wave eikonal equation (\ref{eq:3.23}) 
in Appendix A, using the method of characteristics. 
 The wave eikonal equation (\ref{eq:3.23}) has a complete integral
\begin{equation}
S=a x+ b y+\frac{\beta a}{a^2+b^2+k_d^2} t, \label{eq:3.30}
\end{equation}
 (e.g. Sneddon (1957)). 
The envelope of the family of plane waves (\ref{eq:3.30}) is a solution of the wave eikonal 
equation. For example if we assume $b=b(a)$ then the solution of the equations $S'(a)=0$  
in principle gives a solution for $a$ as a function of $x$,$y$ and $t$. Substitution of 
this function $a=a(x,y,t)$ and $b=b(a)$ in the expression for $S$ in (\ref{eq:3.30}) 
then yields an envelope type solution for $S$ of the wave eikonal equation (\ref{eq:3.23}). 
The group velocity surface can also be regarded as an envelope solution of a family of plane wave 
solutions of the wave eikonal 
equation. 
\end{remark} 

\subsection{The Veronis Green's function form}

There are other forms of the Green's function $\psi_G$  
that are equivalent to (\ref{eq:3.15}) which are obtained by using a different 
integration order in (\ref{eq:3.5}).
Below we show that the Fourier form (\ref{eq:3.5}) of the Green's function can be reduced 
to the Green's function form obtained by Veronis (1958). 

By setting
\begin{equation}
s=-i\omega, \label{eq:v1}
\end{equation}
in the solution (\ref{eq:3.5}) for $\psi_G$ and noting if the complex $\omega$ plane contour is 
displaced $-i c$  then 
\begin{equation}
-ic-\infty<\omega<-ic+\infty \quad \implies\quad c-i\infty<s<c+i\infty, \label{eq:v2}
\end{equation}
then the Fourier solution (\ref{eq:3.5}) reduces to the Fourier-Laplace form:
\begin{equation}
\psi_G=-\frac{N}{4\pi^2}\int_{c-i\infty}^{c+i\infty} \frac{ds}{2\pi i}
\int_{-\infty}^{\infty} dk_x \int_{-\infty}^\infty dk_y 
\frac{\exp\left(st-i{\bf k}{\bf\cdot}{\bf x}\right)}
{s(k_x^2+k_y^2+k_d^2)+ik_x\beta} \label{eq:v3}
\end{equation}
The denominator in (\ref{eq:v3}) can be factored in the form:
\begin{equation}
D(s,{\bf k})=s\left(k_x^2+k_y^2+k_d^2\right)+i\beta k_x
\equiv s\left(k_x-k_x^-\right)\left(k_x-k_x^+\right), 
\label{eq:v4}
\end{equation}
where
\begin{equation}
k_x^\pm=i\left\{-\frac{\beta}{2s}\pm\left[\left(\frac{\beta}{2s}\right)^2+k_y^2+k_d^2\right]^{1/2}\right\}. 
\label{eq:v5}
\end{equation}
For real $s$, the pole $k_x=k_x^+$ is located in the upper half complex $k_x$-plane, 
and $k_x^-$ is located in the lower half $k_x$-plane. For $x>0$ we close the $k_x$-plane contour in 
the ${\rm Im}(k_x)<0$ half plane and for $x<0$ we close the contour in the ${\rm Im}(k_x)>0$ plane
(in order that the contour integrals around the large circular arcs closing the contours converge, 
and also to ensure $\psi_G\to 0$ as $|x|\to \infty$). Using Cauchy's residue theorem, we obtain:
\begin{equation}
\psi_G ({\bf x},t)=-N\int_{c-i\infty}^{c+i\infty} \frac{ds}{2\pi i} 
\int_{-\infty}^{\infty} \frac{dk_y}{2\pi} 
\frac{1}{2s} \exp (st-i k_y y) \frac{1}{\zeta} 
\exp\left(-\frac{\beta x}{2s}-\zeta |x|\right), \label{eq:v6}
\end{equation}
where
\begin{equation}
\zeta=\left[\left(\frac{\beta}{2s}\right)^2+k_y^2+k_d^2\right]^{1/2}. \label{eq:v7}
\end{equation}
The multi-valued function $\zeta$ is restricted to the first Riemann sheet where ${\rm Re}(\zeta)>0$. 
The result (\ref{eq:v6}) applies for both the $x>0$ and $x<0$ cases. 

Only the even part of the integrand in (\ref{eq:v6}) contributes to the $k_y$-integral. We 
obtain:
\begin{align}
\psi_G=&-N\int_{c-i\infty}^{c+i\infty} \frac{ds}{2\pi i} \frac{\exp(st)}{s}
\int_0^\infty \frac{dk_y}{2\pi} \cos(k_y y)
\exp\left(-\frac{\beta x}{2s}\right) \nonumber\\
&\ \times\frac{\exp\left[-\left[k_y^2+k_d^2+(\beta/2s)^2\right]^{1/2}
|x|\right]}
{\left[k_y^2+k_d^2+(\beta/2s)^2\right]^{1/2}}. \label{eq:v8}
\end{align}

Using the standard Fourier cosine transform:
\begin{equation}
\int_0^\infty \frac{1}{\sqrt{x^2+\alpha^2}}\exp\left[-\beta(x^2+\alpha^2)^{1/2}\right] \cos(xy) dx
=K_0\left[\alpha\left(\beta^2+y^2\right)^{1/2}\right], \label{eq:v9}
\end{equation}
(Erdelyi et al. (1954), Tables of Integral Transforms, vol. 1, p. 17, formula (27)], 
the Green's function (\ref{eq:v8}) reduces to the form:
\begin{equation}
\psi_G=-\frac{N}{2\pi}\int_{c-i\infty}^{c+i\infty} \frac{ds}{2\pi i} 
\exp\left(st-\frac{{\beta} x}{2s}\right)
\frac{1}{s} K_0\left(\left[k_d^2+\left(\frac{\beta}{2s}\right)^2\right]^{1/2} r\right), \label{eq:v10}
\end{equation}
where $r=(x^2+y^2)^{1/2}$ is cylindrical radius in the $xy$-plane. Equation (\ref{eq:v10}) is the form 
of the Rossby wave Green's function obtained by Veronis (1958). Veronis also gives further useful 
forms of the Green's function, and examples of application to oceanic Rossby waves. 

After a sequence of transformations, the Veronis Green's function (\ref{eq:v10}) 
for the case $k_d=0$ can be reduced to the form:
\begin{equation}
\psi_V:=\Psi_G=-\frac{N}{\pi} \int_0^\infty \frac{dz}{\sqrt{1+z^2}} 
J_0\left[2\sqrt{\alpha}(z^2+\gamma^2)^{1/2}\right], \label{eq:v11}
\end{equation}
where
\begin{equation}
\alpha=\beta r t,\quad x=r\cos\phi,\quad \gamma=\cos(\phi/2). \label{eq:v12}
\end{equation}
(see Appendix B, (\ref{eq:B20})). This is the final form of the Veronis (1958)  Green's function 
for $k_d=0$  
(his equation (21)). However, equation (21) of Veronis has $\alpha$ rather than $\sqrt{\alpha}$ 
given above in (\ref{eq:v11}). This is a typographical error in Veronis (1958), equation (21).  
 Using the properties of the Bessel function $J_0(z)$ it 
follows that $\psi_V$ satisfies the wave equation:
\begin{equation}
\frac{\partial^2\psi_V}{\partial\nu\partial\gamma} +4\nu\gamma \psi_V=0
\quad\hbox{where}\quad \nu=\sqrt{\alpha}. \label{eq:v13}
\end{equation}
We show in Appendix C, that $\psi_V$ in (\ref{eq:v11})  is equivalent to the Green's function 
in (\ref{eq:nowind1})-(\ref{eq:nowind2}) in proposition \ref{prop31} for the case $k_d=0$ 
(i.e. for an infinite Rossby deformation radius). 

The question arises: Is there a more general version 
of the Veronis solution (\ref{eq:v11})-(\ref{eq:v12})  
that applies for the more general case for a finite Rossby deformation radius $R_d$ (i.e. for 
$k_d\neq 0$)? Such a solution form for $\psi_G$ is given below. 
\begin{proposition}\label{prop33}
The Rossby wave Green's function (\ref{eq:nowind1})-(\ref{eq:nowind2}) can be written in a form,  
that depends on whether the observation point is located in (\romannumeral1)\ the region $r< \beta t/k_d^2$
(the near region) or (\romannumeral2)\ 
$r>\beta t/k_d^2$ (the far region). 
In the near region (\romannumeral1) the solution has the form:
\begin{align}
\psi_G=&-\frac{N}{2\pi}\left(\int_0^{z_{01}} \frac{dz}{\sqrt{z^2+1}} J_0(A_1) 
+\int_0^\infty \frac{dz}{\sqrt{z^2+1}} J_0(A_2)\right)\nonumber\\
\equiv&-\frac{N}{2\pi}\int_{-|z_{01}|}^\infty \frac{dz}{\sqrt{z^2+1}} J_0(A_2), \label{eq:wv1}
\end{align}
where
\begin{equation}
z=\frac{r|k^2-k_c^2|}{\sqrt{4\alpha}(k^2+k_d^2)^{1/2}}, \quad k_c^2=\frac{\beta t}{r}-k_d^2, \label{eq:wv2}
\end{equation}
defines the transformation from the integration variable $k$ in (\ref{eq:nowind1})-(\ref{eq:nowind2}) 
and the new integration variable $z=z(k)$. The transformation $z=z(k)$ is chosen to ensure that the 
argument of the Bessel function in solution (\ref{eq:nowind1})-(\ref{eq:nowind2}) 
remains invariant under the transformation (see Appendix B for the case $k_d=0$). 
We refer to the region $0<k<k_c$ as region 1, and $k>k_c$ 
as region 2 in $k$-space. The inverse of the transformation (\ref{eq:wv2}) is given by the equations:
\begin{equation}
(k^2+k_d^2)r^2=\biggl\{\begin{array}{ccc}
\alpha\left(\sqrt{z^2+1}-z\right)^2&\hbox{if}&0<k<k_c,\\
\alpha\left(\sqrt{z^2+1}+z\right)^2&\hbox{if}&k>k_c.
\end{array}
\biggr. \label{eq:wv3}
\end{equation}
The  quantities $A_j$ in (\ref{eq:wv1}) are defined by the equations: 
\begin{equation}  
A_j=\left[4\hat{\alpha}_j(z^2+\gamma^2)\right]^{1/2},\quad \hat{\alpha}_j=\alpha\delta_j, \quad 
\delta =\frac{k^2}{k^2+k_d^2},\quad  j=1,2 \label{eq:wv4}
\end{equation}
 where the index $j$ refers to whether $0<k<k_c$  or $k>k_c$. The parameters $\delta_j$ ($j=1,2$) 
and $z_{01}$ are defined as:
\begin{align}
 z_{01}=& \frac{k_c^2 r^2}{\sqrt{4\alpha} k_d r}
\equiv \frac{\alpha-k_d^2 r^2}{\sqrt{4\alpha} k_d r}\equiv z(k=0), \nonumber\\
\delta_1=&1-\frac{k_d^2 r^2}{\alpha\left[\sqrt{z^2+1}-z\right]^2}, \quad
\delta_2=1-\frac{k_d^2 r^2}{\alpha\left[\sqrt{z^2+1}+z\right]^2}\equiv \delta_1(-z). \label{eq:wv5}
\end{align}
In (\ref{eq:wv5}), $z_{01}$ refers to the value of $z(k)$ at $k=0$, where $z(k)$ is given by (\ref{eq:wv2}). 
  
In the far region, $r>\beta t/k_d^2$, the solution for $\psi_G$ is given by:
\begin{equation}
\psi_G=-\frac{N}{2\pi}\int_{z_{02}}^\infty \frac{dz}{\sqrt{z^2+1}} J_0(A), \label{eq:wv6}
\end{equation}
where
\begin{align}
&z=\frac{r(k^2+k_d^2-\beta t/r)}{\sqrt{4\alpha} (k^2+k_d^2)^{1/2}}, \quad 
z_{02}= \frac{(k_d^2-\beta t/r)r^2}{\sqrt{4\alpha} k_d r}, \nonumber\\
&(k^2+k_d^2)r^2=\alpha\left(\sqrt{z^2+1}+z\right)^2, \label{eq:wv7}
\end{align}
defines the $z=z(k)$ and $k=k(z)$ transformations and $z_{02}=z(0)$, where $z(k)$ is given by 
(\ref{eq:wv7}).  
 The argument $A$ of $J_0(A)$ is given by:
\begin{align}
A=&\left[4\hat{\alpha}(z^2+\gamma^2)\right]^{1/2}, \quad \hat{\alpha}=\alpha\delta, \nonumber\\
\delta=&\frac{k^2}{k^2+k_d^2}\equiv 1-\frac{k_d^2 r^2}{\alpha(\sqrt{z^2+1}+z)^2}. \label{eq:wv8}
\end{align}
\end{proposition}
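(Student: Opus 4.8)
\medskip\noindent\emph{Proof strategy.} The plan is to reduce the single integral \eqref{eq:nowind1}--\eqref{eq:nowind2} to the stated forms by an explicit, piecewise change of the integration variable $k\mapsto z$, designed so that the phase $A$ of $J_0$ collapses to $[4\hat\alpha(z^2+\gamma^2)]^{1/2}$. The first step is algebraic. Writing $x=r\cos\phi$, $y=r\sin\phi$, $\alpha=\beta r t$, $\gamma=\cos(\phi/2)$, $p=k^2+k_d^2$ and $\tilde\beta=\beta k/p$ as in \eqref{eq:3.26}, one has $A^2=(kx+\tilde\beta t)^2+k^2y^2=k^2r^2+2kr\cos\phi\,\tilde\beta t+\tilde\beta^2t^2$, and using $\beta^2t^2=\alpha^2/r^2$ this becomes $A^2=\bigl(k^2/(p^2r^2)\bigr)\,[\,p^2r^4+2\alpha pr^2\cos\phi+\alpha^2\,]$. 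Completing the square with $\cos\phi=2\cos^2(\phi/2)-1$ gives $p^2r^4+2\alpha pr^2\cos\phi+\alpha^2=(pr^2-\alpha)^2+4\alpha pr^2\gamma^2$, so
\[
A^2=\frac{k^2(pr^2-\alpha)^2}{p^2r^2}+\frac{4\alpha k^2\gamma^2}{p}.
\]
The second term is $4\hat\alpha\gamma^2$ with $\hat\alpha=\alpha\delta$, $\delta=k^2/p$; for the first term to be $4\hat\alpha z^2$ one is forced to take $4\alpha z^2=(pr^2-\alpha)^2/(pr^2)$, i.e.\ $z=|pr^2-\alpha|/(2\sqrt{\alpha pr^2})$. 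Since $pr^2-\alpha=r^2(k^2-k_c^2)$ with $k_c^2=\beta t/r-k_d^2$, this is exactly the substitution \eqref{eq:wv2} (equivalently \eqref{eq:wv7} when $k_c^2<0$), and then $A^2=4\hat\alpha(z^2+\gamma^2)$ is an identity.

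Next I invert $z=z(k)$. With $u=pr^2$ the relation $z^2=(u-\alpha)^2/(4\alpha u)$ is the quadratic $u^2-2\alpha(1+2z^2)u+\alpha^2=0$, whose roots are $u=\alpha(\sqrt{1+z^2}\pm z)^2$ because $(1+2z^2)^2-1=4z^2(1+z^2)$. The branch is fixed by the sign of $u-\alpha$: since $\sqrt{1+z^2}-z<1<\sqrt{1+z^2}+z$ for $z>0$, the range $0<k<k_c$ ($u<\alpha$) goes with the minus sign and $k>k_c$ ($u>\alpha$) with the plus sign, which is \eqref{eq:wv3}; substituting $k_d^2/p=k_d^2r^2/u$ into $\delta=1-k_d^2/p$ then yields $\delta_1,\delta_2$ as in \eqref{eq:wv5}, and $\delta_2(z)=\delta_1(-z)$ because $\sqrt{1+z^2}+z\mapsto\sqrt{1+z^2}-z$ under $z\mapsto-z$. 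For the measure, $k\,dk/(k^2+k_d^2)=dp/(2p)=du/(2u)$, and since $d\ln(\sqrt{1+z^2}\pm z)=\pm\,dz/\sqrt{1+z^2}$ this measure equals $+dz/\sqrt{1+z^2}$ on the plus branch and $-dz/\sqrt{1+z^2}$ on the minus branch. On $0<k<k_c$ the map $k\mapsto z$ decreases from $z_{01}=z(0)$ to $0$, so the sign flip is absorbed by reversing the limits; on $k>k_c$ it increases from $0$ to $\infty$. Adding the two contributions gives the first line of \eqref{eq:wv1}. The single-integral form then follows by the substitution $z\mapsto-z$ in the region-1 integral: $z^2+\gamma^2$ is even and $\hat\alpha_1(-z)=\alpha\delta_1(-z)=\alpha\delta_2(z)=\hat\alpha_2(z)$, so $A_1(-z)=A_2(z)$ and the region-1 integral becomes $\int_{-|z_{01}|}^{0}(\cdots)\,J_0(A_2)$, which joins region 2 to produce $\int_{-|z_{01}|}^{\infty}$.

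In the far region $r>\beta t/k_d^2$ one has $k_c^2<0$, hence $u-\alpha=r^2(k^2+k_d^2-\beta t/r)>0$ for all $k\ge0$: only the plus branch occurs, $k\mapsto z$ is a monotone increasing bijection of $[0,\infty)$ onto $[z_{02},\infty)$ with $z_{02}=z(0)$, and no splitting is needed, so the same two steps deliver \eqref{eq:wv6}--\eqref{eq:wv8} directly. The main obstacle is precisely the bookkeeping in the near region: $z(k)$ is not monotone, with a minimum $z=0$ at $k=k_c$, so the change of variables must be carried out separately on $0<k<k_c$ and on $k>k_c$, choosing the correct root $u=\alpha(\sqrt{1+z^2}\pm z)^2$ and the correct orientation on each piece; what makes the two pieces glue into one integral is the continuity $\delta_1(0)=\delta_2(0)=1-k_d^2r^2/\alpha$ at $k=k_c$ together with the reflection identity $\delta_1(-z)=\delta_2(z)$. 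It remains only to note that the manipulations are legitimate: the integrand $J_0(A_j)/\sqrt{z^2+1}$ is bounded near $z=0$ and $J_0(A)/k\sim k^{-1/2}$ as $k\to0$ as in \eqref{eq:3.19}, so every integral converges absolutely and the change of variables is valid; the behaviour at the source point $(x,y)=(0,0)$ needs the separate treatment already noted after \eqref{eq:3.18}.
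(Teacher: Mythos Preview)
Your proof is correct and follows essentially the same route as the paper: set $A^2=4\hat\alpha(z^2+\gamma^2)$ to force the substitution, solve the resulting quadratic $\mu^2-2\alpha(1+2z^2)\mu+\alpha^2=0$ in $\mu=(k^2+k_d^2)r^2$ for the inverse map, and then treat the branches. You in fact supply the details the paper leaves implicit under ``the remainder of the proof is straightforward'', namely the Jacobian computation $k\,dk/(k^2+k_d^2)=\pm\,dz/\sqrt{1+z^2}$, the orientation/limit bookkeeping on each branch, and the reflection $\delta_1(-z)=\delta_2(z)$ that yields the single-integral form.
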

\begin{remark}
The solution for $k_d=0$ in (\ref{eq:v11})-(\ref{eq:v12}) follows by letting $k_d\to 0$ 
in the near field solution (\ref{eq:wv1}). In that limit, $z_{01}\to\infty$ and there is no far 
field solution in that case.  
\end{remark}
\begin{proof}
Below we derive the transformation (\ref{eq:wv2}). Based on the Veronis (1958) 
Green's function form (\ref{eq:v11}) for $k_d=0$ and the Green's function (\ref{eq:nowind1})-(\ref{eq:nowind2}) 
for $k_d\neq 0$, we set 
\begin{equation}
A^2=4\hat\alpha(z^2+\gamma^2)=k^2 r^2+2\hat{\beta} tx+\frac{\hat{\beta}^2t^2}{k^2}, \label{eq:wv9}
\end{equation}
where $A$ is the $J_0$ Bessel function argument in (\ref{eq:nowind1})-(\ref{eq:nowind2}), 
and we use the notation:
\begin{equation}
\hat{\alpha}=\alpha \frac{k^2}{k^2+k_d^2},\quad \hat{\beta}=\beta\frac{k^2}{k^2+k_d^2}. \label{eq:wv10}
\end{equation}
Noting that 
\begin{equation}
4\hat{\alpha}\gamma^2=2\hat{\beta}t r +2\hat{\beta}t x, \label{eq:wv11}
\end{equation}
we obtain:
\begin{equation}
4\hat{\alpha} z^2=\left(kr- \frac{\hat{\alpha}}{kr}\right)^2. \label{eq:wv12}
\end{equation}
Solving (\ref{eq:wv12}) for $z$ gives the formula (\ref{eq:wv2}) for $z$, where we require that $z>0$. 
To obtain the inverse transformation (\ref{eq:wv3}) note that 
\begin{equation}
\mu=(k^2+k_d^2)r^2, \label{eq:wv13}
\end{equation}
satisfies the quadratic equation:
\begin{equation}
\mu^2-2\alpha (1+2z^2)\mu+\alpha^2=0, \label{eq:wv14}
\end{equation}
which leads to the inverse transformation (\ref{eq:wv3}). The remainder of the proof is straightforward.  
\end{proof}

\subsubsection{Asymptotics for large $t$}
At large time $t$ the behaviour of $\psi_G$ is correlated with the behavior of its 
Laplace transform as $s\to 0$ in (\ref{eq:v10}). In the limit as $s\to 0$, $\beta/s\to\infty$ and 
the argument of $K_0(\zeta)$, $\zeta\to\infty$, i.e. we can use the asymptotic 
form of $K_0(\zeta)$ for large $\zeta$, namely
\begin{equation}
K_0(\zeta)\sim \left(\frac{\pi}{2\zeta}\right)^{1/2} \exp(-\zeta)\quad\hbox{as}\quad \zeta\to\infty, 
\label{eq:v14}
\end{equation}
(Abramowitz and Stegun (1965), formula 9.7.2, p. 378). Using the approximation (\ref{eq:v11}) we obtain:
\begin{equation}
\psi_G\approx -\frac{N}{2\pi}\int_{c-i\infty}^{c+i\infty} \frac{ds}{2\pi i} \left(\frac{\pi}{\beta s r}\right)^{1/2} 
\exp\left[st- \frac{\beta}{2s}\left(x+r\right)\right], \label{eq:v15}
\end{equation}
as $t\to\infty$. 

Using the inverse Laplace transform:
\begin{equation}
\frac{1}{2\pi i}\int_{c-i\infty}^{c+i\infty} 
\exp(pt) p^{-1/2}\exp\left(-\frac{\alpha}{p}\right) dp
= \frac{1}{\sqrt{\pi t}} \cos\left(2\sqrt{\alpha}\sqrt{t}\right), \label{eq:v16}
\end{equation}
(Erdelyi et al. (1954), Vol. 1, p. 245, formula (37)), we obtain
\begin{equation}
\psi_G\sim -\frac{N}{2\pi}\frac{1}{\sqrt{\beta t r}}
\cos\left(\left[2\beta t(x+r)\right]^{1/2}\right), \label{eq:v17}
\end{equation}
for the form of $\psi_G$ at large time $t$. Equation (\ref{eq:v17}) shows that 
$\psi_G$ is maximal at locations where
\begin{equation}
2\beta t(x+r)=4 n^2 \pi^2\quad (n=\hbox{integer}). \label{eq:v18}
\end{equation}

\begin{proposition}
The phase:
\begin{equation}
S=\left[2\beta t(x+r)\right]^{1/2}, \label{eq:v19}
\end{equation}
of the asymptotic solution (\ref{eq:v17}) for $\psi_G$ satisfies the wave eikonal 
equation (\ref{eq:3.23}) with $k_d=0$. 
\end{proposition}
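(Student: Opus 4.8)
The plan is to verify the eikonal equation (\ref{eq:3.23}) with $k_d=0$, namely $S_t(S_x^2+S_y^2)-\beta S_x=0$, by direct implicit differentiation of the squared phase. First I would write $S^2=2\beta t(x+r)$ with $r=(x^2+y^2)^{1/2}$ and differentiate each of the three relations $2SS_t=2\beta(x+r)$, $2SS_x=2\beta t(1+x/r)$, $2SS_y=2\beta t(y/r)$ to obtain
\begin{equation}
S_t=\frac{\beta(x+r)}{S},\qquad S_x=\frac{\beta t(r+x)}{rS},\qquad S_y=\frac{\beta t\, y}{rS}. \label{eq:pp1}
\end{equation}
This step is routine and mirrors the computation already carried out for the exact phase $A$ in the proof of Proposition~3.2; the only caveat, which I would note in a sentence, is that $S$ and $r$ are assumed positive, so the identities hold off the degenerate loci $r=0$ and $x+r=0$ (the negative $x$-axis), where the asymptotic form (\ref{eq:v17}) is itself singular.

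Next I would compute the combination $S_x^2+S_y^2$ from (\ref{eq:pp1}) and simplify using $x^2+y^2=r^2$:
\begin{equation}
S_x^2+S_y^2=\frac{\beta^2t^2}{r^2S^2}\bigl[(r+x)^2+y^2\bigr]
=\frac{\beta^2t^2}{r^2S^2}\cdot 2r(r+x)=\frac{2\beta^2t^2(r+x)}{rS^2}. \label{eq:pp2}
\end{equation}
Then I would substitute $S^2=2\beta t(x+r)$ into (\ref{eq:pp2}) to collapse it to the clean identity $S_x^2+S_y^2=\beta t/r$. This is the key intermediate fact; it is the exact analogue of the relation $S_x^2+S_y^2=k^2$ used in the proof of Proposition~3.2, with the role of the fixed wavenumber $k$ played here by the stationary-phase wavenumber $k^2=\beta t/r$ that emerges from the large-$t$ analysis.

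Finally I would assemble the eikonal equation: using (\ref{eq:pp1}) and the identity just derived,
\begin{equation}
S_t(S_x^2+S_y^2)=\frac{\beta(x+r)}{S}\cdot\frac{\beta t}{r}=\frac{\beta^2t(r+x)}{rS}=\beta S_x,
\label{eq:pp3}
\end{equation}
so $S_t(S_x^2+S_y^2)-\beta S_x=0$, which is precisely (\ref{eq:3.23}) with $k_d=0$. There is no real obstacle here beyond bookkeeping; the only point requiring a word of care is the choice of positive square roots for $S$ and $r$ and the exclusion of the singular loci, exactly as in the remark following Proposition~3.2 about envelope solutions of the eikonal equation. If desired, I would close by remarking that $S$ in (\ref{eq:v19}) is the $R_d\to\infty$ limit of the nonlinear eikonal solution (\ref{eq:3.25}) evaluated at the stationary wavenumber $k^2=\beta t/r$, which explains \emph{a priori} why the asymptotic phase must solve the eikonal equation.
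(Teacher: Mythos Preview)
Your proof is correct and follows essentially the same route as the paper: compute $S_t$, $S_x$, $S_y$ from $S^2=2\beta t(x+r)$, establish the key identity $S_x^2+S_y^2=\beta t/r$, and substitute to verify the eikonal equation. Your additional remarks about singular loci and the connection to the stationary wavenumber $k^2=\beta t/r$ are nice embellishments but not part of the paper's proof.
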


\begin{proof}
Differentiation of (\ref{eq:v19}) gives:
\begin{equation}
S_t=\frac{\beta(x+r)}{S},\quad S_x=\frac{\beta t (x+r)}{rS},\quad S_y=\frac{\beta y t}{rS}. 
\label{eq:v20}
\end{equation}
Using (\ref{eq:v20}) we obtain:
\begin{equation}
S_x^2+S_y^2=\frac{\beta t}{r}. \label{eq:v21}
\end{equation}
Thus, 
\begin{equation}
(S_x^2+S_y^2)S_t-\beta S_x=\frac{\beta t}{r}\frac{\beta (x+r)}{S}-\beta \frac{\beta t(x+r)}{rS}=0. 
\label{eq:v22}
\end{equation}
Thus, $S$ in (\ref{eq:v19}) satisfies the wave eikonal equation (\ref{eq:3.23}) with 
$k_d=0$. 
\end{proof}

\subsubsection{Asymptotics as $t\to 0$}
In this section we discuss the asymptotic form of the Veronis (1958) Rossby 
wave Green's function (\ref{eq:v10}) as $t\to 0$.
\begin{proposition}
For $k_d=0$ (infinite Rossby deformation radius), the Rossby wave Green's function (\ref{eq:v10}) 
in the limit as $t\to 0$ has the form:
\begin{equation}
\psi_G\sim \frac{N}{2\pi}\biggl\{\left[\gamma-\ln\left(\frac{\gamma\beta r t}{4}\right)\right]
+\frac{\beta xt}{2}\left[1-\gamma-\ln\left(\frac{\gamma\beta rt}{4}\right)\right]+\ldots \biggr\}H(t), 
\label{eq:v23}
\end{equation}
where $H(t)$ is the Heaviside step function and
\begin{equation}
\gamma=\lim_{n\to\infty}\left(\sum_{m=1}^n\frac{1}{m}-\ln(n)\right)=0.57721566\ldots, \label{eq:v24}
\end{equation}
is the Euler-Mascheroni constant (e.g. Abramowitz and Stegun (1965), formula 4.1.32, p. 68).
\end{proposition}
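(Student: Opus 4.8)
The plan is to read the $t\to0^{+}$ behaviour of $\psi_{G}$ off the large-$s$ behaviour of the Fourier--Laplace integrand in the Veronis representation (\ref{eq:v10}). Setting $k_{d}=0$ there, the task is to invert
\[
\hat G(s)=\frac{1}{s}\,\exp\!\left(-\frac{\beta x}{2s}\right)K_{0}\!\left(\frac{\beta r}{2s}\right),\qquad
\psi_{G}=-\frac{N}{2\pi}\,\frac{1}{2\pi i}\int_{c-i\infty}^{c+i\infty}e^{st}\hat G(s)\,ds .
\]
The governing principle is the inverse-Laplace analogue of Watson's lemma: because $\hat G$ is analytic and decays in a right half-plane, an asymptotic expansion of $\hat G(s)$ in descending powers of $s$ with logarithmic factors may be inverted term by term to produce the $t\to0^{+}$ expansion of $\psi_{G}$, the Bromwich contour being pushed towards ${\rm Re}\,s\to+\infty$.

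First I would expand $\hat G(s)$ for large $s$. The Bessel argument $\beta r/(2s)$ is then small, so I insert the small-argument series
\[
K_{0}(\zeta)=-\Bigl(\ln\tfrac{\zeta}{2}+\gamma\Bigr)I_{0}(\zeta)+\sum_{m\ge1}\frac{(\zeta^{2}/4)^{m}}{(m!)^{2}}\Bigl(\sum_{j=1}^{m}\tfrac1j\Bigr),
\]
which gives $K_{0}(\beta r/(2s))=\ln s-\ln(\beta r/4)-\gamma+O(s^{-2}\ln s)$. Combining this with $\exp(-\beta x/(2s))=1-\beta x/(2s)+O(s^{-2})$ and the prefactor $1/s$ yields
\[
\hat G(s)=\frac{\ln s}{s}-\frac{\gamma+\ln(\beta r/4)}{s}-\frac{\beta x}{2}\,\frac{\ln s}{s^{2}}+\frac{\beta x}{2}\,\frac{\gamma+\ln(\beta r/4)}{s^{2}}+\cdots .
\]

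Next I would invert term by term using
\[
\mathcal L^{-1}\!\Bigl[\tfrac1s\Bigr]=H(t),\qquad \mathcal L^{-1}\!\Bigl[\tfrac{\ln s}{s}\Bigr]=-(\gamma+\ln t)\,H(t),
\]
\[
\mathcal L^{-1}\!\Bigl[\tfrac{1}{s^{2}}\Bigr]=t\,H(t),\qquad \mathcal L^{-1}\!\Bigl[\tfrac{\ln s}{s^{2}}\Bigr]=t\,(1-\gamma-\ln t)\,H(t),
\]
all of which follow by differentiating $\mathcal L^{-1}[s^{-a}]=t^{a-1}/\Gamma(a)$ with respect to $a$ at $a=1,2$ and using $\Gamma'(1)=-\gamma$ and $\Gamma'(2)=1-\gamma$. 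Multiplying by $-N/(2\pi)$ and regrouping the $\ln t$--plus--constant combinations (and likewise the $O(t)$ terms) into the single logarithm $\ln(\beta rt/4)$ should reproduce (\ref{eq:v23}), with the Euler--Mascheroni constant surfacing both additively and inside the logarithm. Unlike the large-$t$ form (\ref{eq:v17}), the leading small-$t$ behaviour is purely logarithmic and carries no oscillatory eikonal phase.

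The step I expect to be the real obstacle is justifying ``expand then invert'': an inverse Laplace transform does not automatically inherit an asymptotic series the way a forward one does. Concretely I would write $\hat G=P_{M}(s)+R_{M}(s)$ with $P_{M}$ the truncated expansion above and $R_{M}(s)=O(s^{-M-1}\ln s)$, and then bound $\mathcal L^{-1}[R_{M}](t)=o(t^{M})$ as $t\to0^{+}$ by estimating its Bromwich integral on a contour shifted far to the right --- a direct Watson-type estimate. A secondary point to check is that the branch of $\zeta=[k_{d}^{2}+(\beta/2s)^{2}]^{1/2}$ used in (\ref{eq:v10}) reduces, as $k_{d}\to0$, to the principal root $\beta/(2s)$ with positive real part on the relevant contour, so that the small-argument series for $K_{0}$ is indeed the one to use. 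As an independent consistency check one can instead expand inside the convergent $k$-integral of Proposition \ref{prop31} (equivalently the $z$-integral (\ref{eq:v11})), splitting the $k$-range at $k\sim\sqrt{\beta t/r}$ and extracting the logarithm from the non-integrable $\int dk/k$ near $k=0$; the two routes must agree.
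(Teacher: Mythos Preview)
Your proposal is correct and follows essentially the same route as the paper: set $k_d=0$ in (\ref{eq:v10}), expand $K_0(\beta r/2s)$ for small argument (large $s$) together with $\exp(-\beta x/2s)$, and invert the resulting combinations of $s^{-1}$, $s^{-2}$, $s^{-1}\ln s$, $s^{-2}\ln s$ term by term. The paper simply cites the Erd\'elyi tables for the inversions and omits your Watson-type remainder estimate and branch/consistency checks, so your write-up is somewhat more careful, but the underlying argument is identical.
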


\begin{proof}
For $k_d=0$ solution (\ref{eq:v10}) for $\psi_G$ has the simpler form:
\begin{equation}
\psi_G=-\frac{N}{2\pi} \int_{c-i\infty}^{c+i\infty} \frac{ds}{2\pi i} 
\exp\left(st-\frac{\beta x}{2s}\right) \frac{1}{s} K_0\left(\frac{\beta r}{2 s}\right). \label{eq:v25}
\end{equation}
The solution for $\psi_G$ as $t\to 0$ is associated with $s\to\infty$ in Laplace transform 
space. As $s\to\infty$ the argument of the $K_0$ Bessel function, 
$\zeta=\beta r/(2s)\to 0$ as $s\to\infty$. 
Using the expansion for $K_0(\zeta)$ for small $\zeta$:
\begin{equation}
K_0(\zeta)\approx -\left[\gamma+\ln\left(\frac{\zeta}{2}\right)\right]+O(\zeta^2/4). \label{eq:v26}
\end{equation}
where $\gamma$ is the Euler-Mascheroni constant, we obtain the approximation:
\begin{equation}
\bar{\psi}_G\approx \frac{N}{2\pi} \left(1-\frac{\beta x}{2s}\right)\frac{1}{s}
\left[\gamma+\ln\left(\frac{\beta r}{4s}\right)\right], \label{eq:v27}
\end{equation}
for the Laplace transform of $\psi_G$ as $s\to\infty$. Using the inverse Laplace transforms:
\begin{equation}
{\cal L}^{-1}\left(s^{-1}\ln(s)\right)=-\ln(\gamma t),\quad 
{\cal L}^{-1}\left(s^{-2}\ln(s)\right)= t\left[1-\ln(\gamma t)\right], \label{eq:v28}
\end{equation}
(Erdelyi et al. (1954), Vol. 1, p. 250, formulas (1) and (2)), we obtain the solution (\ref{eq:v23}) for $\psi_G$
as $t\to 0$.
\end{proof}

\begin{proposition}
For $k_d\neq 0$ (finite Rossby deformation radius), the Rossby wave Green's function (\ref{eq:v10}) 
in the limit as $t\to 0$ (with $x$ and $y$ fixed), has the form:
\begin{align}
\psi_G\approx&-\frac{N}{2\pi}\biggl[K_0(k_d r)\left(J_0(\sqrt{2\beta x t})H(x)
+I_0(\sqrt{2\beta |x| t})[1-H(x)]\right)\nonumber\\
&\quad -K_1(k_d r) \frac{\beta^2r^2 t^2}{16k_d r}\biggr], \label{eq:v29}
\end{align}
where $I_n(z)$ and $K_n(z)$ are modified Bessel functions of the first and second kind, 
and $H(x)$ is the Heaviside step function. 
\end{proposition}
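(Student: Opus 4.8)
The plan is to start from the Veronis representation (\ref{eq:v10}) and to exploit the standard Abelian principle that the $t\to0$ behaviour of $\psi_G$ is governed by the $s\to\infty$ behaviour of its Laplace transform (a Watson's-lemma argument applied to the Bromwich integral). Since $\beta/(2s)\to0$ as $s\to\infty$, I would first expand the argument of $K_0$ in (\ref{eq:v10}),
\[
\left[k_d^2+\left(\frac{\beta}{2s}\right)^2\right]^{1/2} r
= k_d r\left[1+\left(\frac{\beta}{2 s k_d}\right)^2\right]^{1/2}
= k_d r+\frac{\beta^2 r}{8 k_d s^2}+O\!\left(s^{-4}\right),
\]
and then Taylor-expand $K_0$ about the point $k_d r$, using $K_0'(z)=-K_1(z)$:
\[
K_0\!\left(k_d r+\frac{\beta^2 r}{8 k_d s^2}+\cdots\right)
= K_0(k_d r)-K_1(k_d r)\,\frac{\beta^2 r}{8 k_d s^2}+O\!\left(s^{-4}\right).
\]

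Next I would insert this into (\ref{eq:v10}), so that the Laplace-space integrand becomes
\[
-\frac{N}{2\pi}\,\frac{1}{s}\,e^{-\beta x/(2s)}
\left[K_0(k_d r)-K_1(k_d r)\,\frac{\beta^2 r}{8 k_d s^2}\right]+\cdots,
\]
and invert term by term. The inversions needed are the classical pairs ${\cal L}^{-1}\!\left(s^{-1}e^{-a/s}\right)=J_0(2\sqrt{a t})$ for $a>0$ and ${\cal L}^{-1}\!\left(s^{-1}e^{a/s}\right)=I_0(2\sqrt{a t})$ for $a>0$ (Erdelyi et al. (1954)), applied with $a=\beta|x|/2$; since $2\sqrt{at}=\sqrt{2\beta|x|t}$ these reproduce the arguments in (\ref{eq:v29}). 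The sign of $\beta x$ dictates whether $J_0$ or $I_0$ appears, which is exactly how the Heaviside factors $H(x)$ and $1-H(x)$ are generated. For the $K_1$ term, since $e^{-\beta x/(2s)}=1+O(1/s)$ as $s\to\infty$, the leading contribution comes from ${\cal L}^{-1}(s^{-3})=t^2/2$, producing $-K_1(k_d r)\,\beta^2 r t^2/(16 k_d)$, which is the second term of (\ref{eq:v29}) once written as $-K_1(k_d r)\,\beta^2 r^2 t^2/(16 k_d r)$. Collecting the two pieces and restoring the overall factor $-N/(2\pi)$ gives (\ref{eq:v29}). As a consistency check, the leading term $-\tfrac{N}{2\pi}K_0(k_d r)$ (using $J_0(0)=I_0(0)=1$) is $N$ times the fundamental solution of the two-dimensional modified Helmholtz operator $\nabla_\perp^2-k_d^2$, which is what one expects immediately after the impulse, before the $\beta$-term in (\ref{eq:2.1}) has had time to act on the vorticity.

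The main obstacle is the rigorous justification that the asymptotic expansion of the Laplace-space integrand may be inverted term by term to yield a genuine small-$t$ expansion of $\psi_G$: this needs a Watson's-lemma-type estimate of the remainder along the Bromwich contour, together with attention to uniformity in $x$ near $x=0$, where the $J_0/I_0$ dichotomy degenerates. A secondary bookkeeping issue is to confirm that all discarded contributions — the $O(s^{-4})$ terms from both expansions and the $O(1/s)$ correction in $e^{-\beta x/(2s)}$ multiplying the $K_1$ term — are of strictly higher order in $t$ than the two terms retained in (\ref{eq:v29}), so that the stated approximation is internally consistent.
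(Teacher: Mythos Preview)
Your approach is essentially identical to the paper's: expand the argument of $K_0$ for large $s$, Taylor-expand $K_0$ about $k_d r$ using $K_0'=-K_1$, and invert term by term with the standard pairs ${\cal L}^{-1}(s^{-1}e^{\mp a/s})=J_0(2\sqrt{at}),\,I_0(2\sqrt{at})$ and ${\cal L}^{-1}(s^{-3})=t^2/2$. The paper simply cites Erdelyi et al.\ for these inversions and does not discuss the Watson's-lemma remainder or the uniformity issues you raise, so your write-up is if anything more careful than the original.
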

\begin{remark}
The arguments of the Bessel functions $I_0(z)$ and $J_0(z)$ in (\ref{eq:v29}) are necessarily small.
\end{remark}

\begin{proof}
The solution (\ref{eq:v10}) for $\psi_G$ has the inverse Laplace transform:
\begin{equation}
\bar{\psi}_G=-\frac{N}{2\pi} \exp\left(-\frac{\beta x}{2s}\right) \frac{1}{s} 
K_0\left[\left(k_d^2+\left(\frac{\beta}{2s}\right)^2\right)^{1/2} r\right]. \label{eq:v30}
\end{equation}
In the limit as $|s|\to \infty$ (\ref{eq:v30}) may be approximated by:
\begin{align}
\bar{\psi}_G\approx&-\frac{N}{2\pi}\exp\left(-\frac{\beta x}{2s}\right) \frac{1}{s}
\left[K_0(k_d r)+K_0'(k_d r)\frac{\beta^2 r}{8 s^2 k_d}\right]\nonumber\\
\approx &-\frac{N}{2\pi}\left[\exp\left(-\frac{\beta x}{2s}\right) \frac{1}{s}K_0(k_d r)
-K_1(k_d r)\frac{\beta^2 r}{8 k_d s^3}\right]. \label{eq:v31}
\end{align}
Laplace inversion of (\ref{eq:v31}) gives the approximation (\ref{eq:v29}) for $\psi_G$ as $t\to 0$. 
In the inversion of (\ref{eq:v31}) we used inverse Laplace transforms from Erdelyi et al. (1954), 
vol. 1, p. 245, formulas (35) and (40).
\end{proof}
\begin{remark}
The approximation (\ref{eq:v29}) shows that $\psi_G$ is bounded for $r\neq 0$. However as $r\to 0$
it becomes unbounded. The wavenumber $k_d$ (the Rossby inverse deformation radius) plays an important role 
near $r=0$. The result for $k_d=0$ in (\ref{eq:v23}) is clearly different than for 
$k_d\neq 0$ given in (\ref{eq:v29}).
\end{remark}

\section{Green's function with wind ($\Omega\neq 0$)}
%section4
\begin{proposition}
The Green's function solution $\psi_G$ of the Rossby wave equation (\ref{eq:2.6}) with delta function 
source term (\ref{eq:2.7}) for an azimuthal wind, with constant angular velocity $\Omega\neq 0$ 
is given by the integral:
\begin{equation}
\psi_G=-\frac{N}{2\pi} \int_0^\infty dk \frac{k}{k^2+k_d^2} J_0(D), \label{eq:rotate1}
\end{equation}
where
\begin{align}
D=&\left[a^2+D_1^2-2a D_1\cos(\Omega t+\epsilon)\right]^{1/2}, \nonumber\\
a=&\frac{k \beta}{(k^2+k_d^2)\Omega},\quad D_1=\left[(k x)^2+\left(k y+a\right)^2\right]^{1/2}, 
\nonumber\\
\sin\epsilon=&\frac{k x}{D_1},\quad \cos\epsilon=\frac{(k y+a)}{D_1}. \label{eq:rotate2}
\end{align}
and $J_0(D)$ is a Bessel function of the first kind of order zero and argument $D$. An alternative form for $D$ in $(x,y,t)$ coordinates is:
\begin{equation}
D=\left\{\left[k x+a\sin(\Omega t)\right]^2
+\left[k y+a-a\cos(\Omega t)\right]^2\right\}^{1/2}. 
\label{eq:rotate3}
\end{equation}
In the limit as $\Omega\to 0$ 
the Green's function (\ref{eq:rotate1}) reduces to the Rossby wave Green's function (\ref{eq:nowind1}).
\end{proposition}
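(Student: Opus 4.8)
The plan is to mimic the Fourier-transform argument of Proposition~\ref{prop31}, except that the transform is taken in the spatial variables only, with $t$ kept explicit, because the wind term turns the transformed equation into a transport equation in $\mathbf{k}$-space rather than an algebraic one. Writing $\bar\psi(\mathbf{k},t)=\int e^{i\mathbf{k}\cdot\mathbf{x}}\psi\,d^2x$ and using $\mathcal{F}(xg)=-i\partial_{k_x}\bar g$, $\mathcal{F}(yg)=-i\partial_{k_y}\bar g$, $\mathcal{F}(g_x)=-ik_x\bar g$, $\mathcal{F}(g_y)=-ik_y\bar g$, the advective rotation operator $\Omega(-y\partial_x+x\partial_y)$ in (\ref{eq:2.6}) maps to the rotation generator $\Omega(k_x\partial_{k_y}-k_y\partial_{k_x})=\Omega\,\partial_\Phi$ in $\mathbf{k}$-space, with $(k_x,k_y)=k(\cos\Phi,\sin\Phi)$. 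Since this operator acts on $\chi=\nabla_\perp^2\psi-k_d^2\psi$, whose transform is $\bar\chi=-(k^2+k_d^2)\bar\psi$, and since $k^2+k_d^2$ is rotationally invariant, $\partial_\Phi$ commutes with the factor $k^2+k_d^2$; dividing through by $-(k^2+k_d^2)$ then reduces (\ref{eq:2.6}) with source (\ref{eq:2.7}) to
\begin{equation}
\partial_t\bar\psi+\Omega\,\partial_\Phi\bar\psi+\frac{i\beta k\cos\Phi}{k^2+k_d^2}\,\bar\psi=-\frac{N\,\delta(t)}{k^2+k_d^2}. \label{eq:transport}
\end{equation}

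I would then solve (\ref{eq:transport}) by the method of characteristics: along $\Phi=\Phi_0+\Omega t$ it becomes a linear first-order ODE in $t$ with integrating factor $\exp[\,i\beta k(k^2+k_d^2)^{-1}\!\int^t\cos(\Phi_0+\Omega s)\,ds\,]$. Imposing causality ($\bar\psi\equiv0$ for $t<0$, consistent with the Lighthill condition used in Proposition~\ref{prop31}), the source $\delta(t)$ produces the jump $\bar\psi(0^+)=-N/(k^2+k_d^2)$, after which the evolution is homogeneous. Parametrising by the current angle ($\Phi_0=\Phi-\Omega t$) and using $\int_0^t\cos(\Phi-\Omega t+\Omega s)\,ds=\Omega^{-1}[\sin\Phi-\sin(\Phi-\Omega t)]$ gives, for $t>0$,
\begin{equation}
\bar\psi(\mathbf{k},t)=-\frac{N}{k^2+k_d^2}\,\exp\!\bigl[-ia\bigl(\sin\Phi-\sin(\Phi-\Omega t)\bigr)\bigr],\qquad a=\frac{k\beta}{(k^2+k_d^2)\Omega},
\end{equation}
which is the precise sense in which the position-space rotation has become a rotation acting on the phase in transform space.

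Finally I would invert the spatial transform in polar coordinates, $\psi_G=(2\pi)^{-2}\int_0^\infty k\,dk\int_0^{2\pi}d\Phi\,e^{-ikr\cos(\Phi-\phi)}\bar\psi$, and collect the $\Phi$-dependent exponent $-ikr\cos(\Phi-\phi)-ia[\sin\Phi-\sin(\Phi-\Omega t)]$ into the single sinusoid $-i[P\cos\Phi+Q\sin\Phi]=-iD\sin(\Phi+\tilde\epsilon)$ with $P=kx+a\sin\Omega t$, $Q=ky+a-a\cos\Omega t$ and $D=(P^2+Q^2)^{1/2}$; this is exactly the form (\ref{eq:rotate3}), while writing $kx=D_1\sin\epsilon$, $ky+a=D_1\cos\epsilon$ and expanding $\cos(\Omega t+\epsilon)$ reproduces the equivalent form (\ref{eq:rotate2}). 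The remaining integral $\int_0^{2\pi}e^{-iD\sin(\Phi+\tilde\epsilon)}\,d\Phi=2\pi J_0(D)$ follows from the Jacobi--Anger expansion (\ref{eq:3.14}) exactly as in Proposition~\ref{prop31}, yielding (\ref{eq:rotate1}). For the $\Omega\to0$ limit, $a\sin\Omega t\to\beta kt/(k^2+k_d^2)$ and $a(1-\cos\Omega t)\to0$, so $P\to kx+\beta kt/(k^2+k_d^2)$, $Q\to ky$, hence $D\to A$ of (\ref{eq:nowind2}) and (\ref{eq:rotate1}) collapses to (\ref{eq:nowind1}).

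The step I expect to be the main obstacle is the operator bookkeeping that produces (\ref{eq:transport}): one has to notice that $\Omega\,\partial_\Phi$ commutes with multiplication by the rotationally invariant symbol $k^2+k_d^2$ (so the transformed equation really is a clean scalar transport equation with no leftover terms) and, just as importantly, that one should \emph{not} also transform in $t$ --- otherwise the $\partial_\Phi$ term makes the $(\omega,\mathbf{k})$-space equation differential rather than algebraic and the solution becomes entangled with the $2\pi$-periodicity-in-$\Phi$ (Floquet) structure familiar from McKenzie and Webb (2015). Once (\ref{eq:transport}) is solved, the characteristic integral, the collapse of the $\Phi$-dependent phase into a single sinusoid, the identity $D^2=a^2+D_1^2-2aD_1\cos(\Omega t+\epsilon)$, and the Jacobi--Anger step are all routine and essentially repeat the corresponding manipulations in the proof of Proposition~\ref{prop31}.
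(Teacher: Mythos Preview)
Your argument is correct, but it follows a genuinely different route from the paper. The paper does precisely what you caution against: it takes the full Fourier transform in $(x,y,t)$, so that (\ref{eq:2.6}) becomes the first-order ODE (\ref{eq:4.7}) in the azimuthal angle $\Phi$ with $\omega$ as a parameter. That ODE is then solved with the integrating factor (\ref{eq:4.8}) and the $2\pi$-periodicity condition $\bar\psi(2\pi)=\bar\psi(0)$, which fixes $\bar\psi(0)$ and produces the Bessel-series representation (\ref{eq:4.20}). Inversion proceeds by first integrating over $\Phi$ (picking out the diagonal $m=n$ terms), then over $\omega$ by residues at the infinite family of poles $\omega=-n\Omega$, and finally collapsing the resulting Neumann series $\sum_n J_n(D_1)J_n(a)e^{-in(\Omega t+\epsilon)}$ into $J_0(D)$ via the addition theorem (\ref{eq:4.28})--(\ref{eq:4.29}).

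Your spatial-only transform replaces all of that machinery---periodicity condition, Bessel expansion, residues, Neumann series---by a single characteristic integral and one application of (\ref{eq:3.14}); it is appreciably more elementary and makes the geometric content (the rotation in $\mathbf{k}$-space carries the phase along $\Phi=\Phi_0+\Omega t$) completely transparent. What the paper's longer route buys is the intermediate representation (\ref{eq:4.27}), which displays the solution as a superposition of discrete frequencies $\omega=-n\Omega$ and ties the Green's function directly to the Floquet/three-term-recurrence structure of McKenzie and Webb (2015); your route bypasses that connection. Your closing remark that transforming in $t$ ``should not'' be done is therefore too strong: it can be done, and the paper does it, but at the cost of the extra Bessel-function bookkeeping you avoid.
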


\begin{proof}
 As in the solution method for $\Omega=0$ case (Section 3), we use the Fourier transform 
$\bar{\psi}({\bf k},\omega)$ defined by (\ref{eq:3.1}). The Fourier transforms of the various derivative terms in (\ref{eq:2.6}) are:
\begin{align}
{\cal F}\left[\derv{t}\left(\psi_{xx}+\psi_{yy}\right)\right]=&-i\omega k^2\bar{\psi}, 
\quad {\cal F}\left(\beta\psi_x\right)=-ik_x\beta\bar{\psi}, \nonumber\\
{\cal F}\left(x h_y\right)=&-k_y\deriv{\bar{h}}{k_x},
\quad {\cal F}\left(y h_x\right)=-k_x\deriv{\bar{h}}{k_y}, \label{eq:4.1}
\end{align}
where in the application of interest:
\begin{equation}
h=\psi_{xx}+\psi_{yy},\quad \bar{h}=-k^2 \bar{\psi}. \label{eq:4.2}
\end{equation}
Using (\ref{eq:4.1}) and (\ref{eq:4.2}) we obtain:
\begin{align}
\Omega{\cal F}\left[\left(-y\derv{x}+x\derv{y}\right)\nabla_\perp^2\psi\right]
=&\Omega\left(k_x\derv{k_y}-k_y\derv{k_x}\right)\left(-k^2\bar{\psi}\right)
=-\Omega\derv{\Phi}\left(k^2\bar{\psi}\right),\nonumber\\
\Omega{\cal F}\left[\left(-y\derv{x}+x\derv{y}\right)(-k_d^2\psi)\right]
=&\Omega\left(k_x\derv{k_y}-k_y\derv{k_x}\right)\left(-k_d^2\bar{\psi}\right)
=-\Omega\derv{\Phi}\left(k_d^2\bar{\psi}\right) \label{eq:4.3}
\end{align}
In (\ref{eq:4.3}) we have used the result:
\begin{equation}
k_x\derv{k_y}-k_y\derv{k_x}=\derv{\Phi}, \label{eq:4.4}
\end{equation}
where $(k_x,k_y)=k\left(\cos\Phi,\sin\Phi\right)$ is the ${\bf k}$-vector in cylindrical 
polar coordinates. 

Taking the Fourier transform of the Rossby wave equation (\ref{eq:2.6}), (\ref{eq:2.6}) reduces 
to an ordinary differential equation in transform space:
\begin{equation}
-i\omega (k^2+k_d^2) \bar{\psi}
-i k_x\beta\bar{\psi}-\Omega \frac{d}{d\Phi}\left((k^2+k_d^2)\bar{\psi}\right)=N. 
\label{eq:4.5}
\end{equation}
Setting
\begin{equation}
\bar{\omega}=\frac{\omega}{\Omega},\quad \frac{k_x\beta}{(k^2+k_d^2)\Omega}=a\cos\Phi, \quad 
a=\frac{k\beta}{(k^2+k_d^2)\Omega}, \label{eq:4.6}
\end{equation}
 reduces  (\ref{eq:4.5}) to the equation:
\begin{equation}
\frac{d\bar{\psi}}{d\Phi}+i\left(\bar{\omega}+a\cos\Phi\right) \bar{\psi}
=-\frac{N}{\Omega (k^2+k_d^2)}. \label{eq:4.7}
\end{equation}
The integrating factor for (\ref{eq:4.7}) is:
\begin{equation}
I(\Phi)= \exp\left[i\int^\Phi \left(\bar{\omega}+a\cos\Phi\right)d\Phi\right] 
=\exp\left[i\left(\bar{\omega}\Phi+a\sin\Phi\right)\right]. \label{eq:4.8}
\end{equation}

Thus, (\ref{eq:4.7}) can be written in the form:
\begin{equation}
\frac{d}{d\Phi}\left(I\bar{\psi}\right)=-\frac{N I(\Phi)}{\Omega (k^2+k_d^2)}. \label{eq:4.9}
\end{equation}
Equation (\ref{eq:4.9}) can be integrated to give the solution for $\bar{\psi}(\Phi)$ in the form:
\begin{equation}
I(\Phi)\bar{\psi}(\Phi)=I(0)\bar{\psi}(0)-\frac{N}{\Omega (k^2+k_d^2)} \mathscr{I}(\Phi), \label{eq:4.10}
\end{equation}
where
\begin{equation}
\mathscr{I}(\Phi)=\int_0^\Phi I(\Phi')d\Phi'. \label{eq:4.11}
\end{equation}
Since $\Phi$ is the azimuthal angle in ${\bf k}$-space, $\bar{\psi}(\Phi)$ must be periodic in $\Phi$ 
with period $2\pi$, i.e. 
\begin{equation}
\bar{\psi}(2\pi)=\bar{\psi}(0). \label{eq:4.12}
\end{equation}
setting $\Phi=2\pi$ in (\ref{eq:4.10}) and using the periodicity condition (\ref{eq:4.12}) we obtain the 
constraint equation:
\begin{equation}
\bar{\psi}(0)=-\frac{N}{\Omega (k^2+k_d^2)} \frac{\mathscr{I}(2\pi)}{[I(2\pi)-I(0)]}, \label{eq:4.13}
\end{equation}
where
\begin{equation}
I(2\pi)=\exp(2\pi i\bar{\omega})\quad \hbox{and}\quad I(0)=1. \label{eq:4.14}
\end{equation}
Thus, the formal solution for $\bar{\psi}$ can now be obtained by 
substituting (\ref{eq:4.13}) and (\ref{eq:4.14})  in (\ref{eq:4.10}) to obtain the solution for 
$\bar{\psi}(\Phi)$ in the form:
\begin{align}
I(\Phi)\bar{\psi}(\Phi)=&\frac{N}{\Omega (k^2+k_d^2)[I(2\pi)-I(0)]}\biggl\{-\mathscr{I}(2\pi) 
-[I(2\pi)-I(0)]\mathscr{I}(\Phi)\biggr\}\nonumber\\
=&\frac{N}{\Omega (k^2+k_d^2)[1-\exp(2\pi i\bar{\omega})]}
\left[ \int_\Phi^{2\pi} I(\Phi')d\Phi'+\exp(2\pi i\bar{\omega})\int_0^\Phi I(\Phi')d\Phi'\right]. 
\label{eq:4.15}
\end{align}

The solution (\ref{eq:4.10}) or (\ref{eq:4.15}) can be related to Bessel function expansions, 
by using the Bessel function generating identity (\ref{eq:3.14}). In particular,
\begin{align}
\mathscr{I}(2\pi)=&\int_0^{2\pi} \exp\left(i\bar{\omega}\Phi+ia\sin\Phi\right) d\Phi\nonumber\\
=&\int_0^{2\pi}\exp(i\bar{\omega}\Phi)\left(\sum_{n=-\infty}^\infty J_n(a)\exp(in\Phi)\right) d\Phi. \label{eq:4.16}
\end{align}
For general $\bar{\omega}$ with $\bar{\omega}+n\neq 0$, (\ref{eq:4.16}) gives:
\begin{equation}
\mathscr{I}(2\pi)=\sum_{n=-\infty}^\infty \frac{[\exp(2\pi i\bar{\omega})-1]}{i(\bar{\omega}+n)} J_n(a). 
\label{eq:4.17}
\end{equation}
Using (\ref{eq:4.17}) in (\ref{eq:4.13}) gives the result:
\begin{equation}
\bar{\psi}(0)=\frac{i N}{\Omega (k^2+k_d^2)} \sum_{n=-\infty}^\infty \frac{J_n(a)}{(\bar\omega+n)}, 
\label{eq:4.18}
\end{equation}
for $\bar{\psi}(0)$. Similarly, 
\begin{equation}
\mathscr{I}(\Phi)=-i\sum_{n=-\infty}^\infty \frac{J_n(a)}{(\bar\omega+n)}
\left(\exp[i(\bar{\omega}+n)\Phi]-1\right). \label{eq:4.19}
\end{equation}
Substitute (\ref{eq:4.17})-(\ref{eq:4.19}) in (\ref{eq:4.10}) then gives the solution for $\bar{\psi}(\Phi)$ 
in the form:
\begin{equation}
\bar{\psi}(\Phi)=\frac{i N}{\Omega (k^2+k_d^2)}\exp(-ia\sin\Phi) 
\left(\sum_{n=-\infty}^\infty\frac{J_n(a)}{(\bar\omega+n)}\exp(in\Phi)\right). \label{eq:4.20}
\end{equation}
Note that $\bar{\psi}(\Phi+2\pi)=\bar{\psi}(\Phi)$ in (\ref{eq:4.20}). 

Substitute the solution (\ref{eq:4.20}) for $\bar{\psi}(\Phi)$ in the inverse Fourier transform formula:
\begin{align}
\psi(x,y,t)=&\frac{N}{8\pi^3} \int_{-\infty}^{\infty}  dk_x\int_{-\infty}^\infty dk_y 
\int_{-\infty}^\infty d\omega \exp[i(\omega t-{\bf k\cdot x})] 
\bar{\psi}({\bf k},\omega)\nonumber\\
\equiv& \frac{N}{8\pi^3} \int_0^{2\pi} d\Phi \int_0^{\infty} k dk 
\int_{-\infty}^{\infty} d\omega
\biggl\{\exp\left[ i\left(\omega t-k (x\cos\Phi+y\sin\Phi)\right)\right]\nonumber\\
&\times \frac{i}{\Omega (k^2+k_d^2)} \exp(-ia\sin\Phi)
\left(\sum_{n=-\infty}^{\infty} \frac{J_n(a)}{(\bar{\omega}+n)}\exp(i n\Phi)\right)\biggr\}. \label{eq:4.21}
\end{align}
gives the formal solution for $\psi(x,y,t)$. Writing
\begin{equation}
k x\cos\Phi+k y\sin\Phi+a\sin\Phi=D_1 \sin(\Phi+\epsilon), \label{eq:4.22}
\end{equation}
implies 
\begin{align}
k x=&D_1\sin\epsilon,\quad k y+a=D_1\cos\epsilon, \nonumber\\
D_1=&\left[(k x)^2+(k y +a)^2\right]^{1/2},\quad \tan\epsilon=\frac{k x}{(k y+a)}. 
\label{eq:4.23}
\end{align}
Writing the integrand in curly brackets in (\ref{eq:4.21}) as $\hat{\psi}$ we obtain:
\begin{align}
\hat{\psi}=&\frac{i\exp(i\omega t)}{\Omega (k^2+k_d^2)}\exp\left[-iD_1\sin(\Phi+\epsilon)\right] 
\sum_{n=-\infty}^\infty \frac{J_n(a)}{(\bar{\omega}+n)}\exp(i n\Phi)\nonumber\\
\equiv& \frac{i\exp(i\omega t)}{\Omega (k^2+k_d^2)}
\sum_{m=-\infty}^\infty J_m(D_1)\exp[-im(\Phi+\epsilon)]\sum_{n=-\infty}^\infty 
\frac{J_n(a)}{(\bar{\omega}+n)}\exp(i n\Phi)\nonumber\\
=&\frac{i\exp(i\omega t)}{\Omega (k^2+k_d^2)} \sum_{m=-\infty}^\infty \sum_{n=-\infty}^\infty 
\frac{J_m(D_1)J_n(a)}{(\bar{\omega}+n)}\exp[i(n-m)\Phi-i m\epsilon]. \label{eq:4.24}
\end{align}

The integral of $\hat{\psi}$ in (\ref{eq:4.24}) over $\Phi$ from $\Phi=0$ to $\Phi=2\pi$ gives:
\begin{equation}
\int_0^{2\pi}\hat{\psi}\ d\Phi=\frac{2\pi i\exp(i\omega t)}{\Omega (k^2+k_d^2)}\sum_{n=-\infty}^\infty 
\frac{J_n(D_1)J_n(a)}{(\bar{\omega}+n)}\exp(-i n\epsilon) \label{eq:4.25}
\end{equation}

Using the result (\ref{eq:4.25}) in (\ref{eq:4.21}) we obtain:
\begin{equation}
\psi(x,y,t)=\frac{i N}{4\pi^2}\int_0^\infty dk \frac{k}{k^2+k_d^2}
\int_{-\infty}^{\infty} d\omega 
\exp(i\omega t)\left(\sum_{n=-\infty}^\infty \frac{J_n(D_1)J_n(a)
\exp(-i n\epsilon)}{(\omega+n\Omega)}\right). \label{eq:4.26}
\end{equation}
Taking into account the poles of the integrand in the complex $\omega$-plane at $\omega=-n\Omega$, 
and deforming the contour near the poles with small semi-circular arcs 
$\omega=-n\Omega+r\exp(i\theta)$, ($-\pi\leq\theta\leq 0$) and completing the contour by a 
large semi-circular  arc $C_R:\ \omega=R\exp(i\theta)$ ($0\leq\theta\leq\pi$) in the $\rm{Im}(\omega)>0$ 
half plane, noting that $|\int_{C_R}|\to 0$ as $R\to\infty$, and using the Residue theorem, we obtain from 
(\ref{eq:4.26}) the solution form:
\begin{equation}
\psi(x,y,t)=-\frac{N}{2\pi}\int_0^\infty dk\ \frac{k}{k^2+k_d^2}\left(\sum_{n=-\infty}^\infty 
J_n(D_1)J_n(a)\exp[-in(\Omega t+\epsilon)]\right) H(t), \label{eq:4.27}
\end{equation}
where $H(t)$ is the Heaviside step function. 
Next we use the Neumann series identity:
\begin{equation}
\sum_{n=-\infty}^\infty J_n(D_1) J_n(a)\exp[-in(\Omega t+\epsilon)]=J_0(D), \label{eq:4.28}
\end{equation}
where
\begin{equation}
D=\left[a^2+D_1^2-2aD_1\cos(\Omega t+\epsilon)\right]^{1/2}. \label{eq:4.29}
\end{equation}
The result (\ref{eq:4.28})-(\ref{eq:4.29}) is a special case of the Neumann series formula (8.5.30) 
given by Gradshteyn and Ryzhik (2000), p. 930.
Using (\ref{eq:4.28})-(\ref{eq:4.29}) in (\ref{eq:4.27}) gives the Green's function (\ref{eq:rotate1}). 
This completes the proof.
\end{proof}

\subsection{\bf The limit as $\Omega\to 0$}
The Green's function (\ref{eq:rotate1}) for Rossby waves for $\Omega\neq 0$ becomes the Green's function 
(\ref{eq:nowind1}) in the limit as $\Omega\to 0$. To show this consider the form of $D$ in (\ref{eq:rotate3}) 
in the limit as $\Omega\to 0$. Using $a=\beta k/[(k^2+k_d^2)\Omega]$ and the 
approximations $\sin(\Omega t)\sim\Omega t$  and $\cos(\Omega t)\sim 1-\Omega^2 t^2/2$ for small $\Omega t$ in (\ref{eq:rotate3}) we obtain:
\begin{align}
D^2=&\left[\left(k x+a\sin(\Omega t)\right)^2
+\left(k y+a-a\cos(\Omega t)\right)^2\right]\nonumber\\
\approx&[k x+a\Omega t]^2+\left[k y+a\Omega^2t^2/2\right]^2. \label{eq:4.30}
\end{align}
However, 
\begin{equation}
a\Omega t=\frac{k\beta t}{k^2+k_d^2}, \quad \frac{a\Omega^2 t^2}{2}
=\frac{k\beta t}{k^2+k_d^2} \frac{\Omega t^2}{2}\to 0, 
\label{eq:4.30a}
\end{equation}
as $\Omega\to 0$. 
 Using these 
results for small $\Omega$, (\ref{eq:4.30}) gives $D^2\to A^2$ as $\Omega\to 0$. 
Thus, the solution (\ref{eq:rotate1}) 
reduces to the solution (\ref{eq:nowind1}) in the limit as $\Omega\to 0$. 

\begin{remark}
The Green's function (\ref{eq:rotate1}) can be written as:
\begin{equation}
\psi_G=-\frac{N}{2\pi}\int_0^\infty dk \frac{k}{k^2+k_d^2} \psi^D\quad \hbox{where}\quad \psi^D=J_0(D). 
\label{eq:4.31}
\end{equation}
Away from the source at $(x,y,t)=(0,0,0)$, $\psi^D$ satisfies the Rossby wave equation (\ref{eq:2.6}) 
with $Q=0$, i.e., 
\begin{equation}
\left[\derv{t}+\Omega\left(x\derv{y}-y\derv{x}\right)\right]
\left(\psi^D_{xx}+\psi^D_{yy}-k_d^2 \psi^D\right)+\beta\psi^D_x=0. \label{eq:4.32}
\end{equation}
This result follows by using the properties of the Bessel function $J_0(D)$ and the derivatives 
of the function $D(x,y,t)$ in (\ref{eq:rotate3}). 
\end{remark}

\section{Solution characteristics and examples}
In this section we investigate the physical characteristics of the Rossby wave Green's function 
solutions for the case of no rotating wind ($\Omega=0$) described by (\ref{eq:nowind1})-(\ref{eq:nowind2})
including both the cases of an infinite Rossby deformation radius ($k_d=0$) and also the rotating wind 
case ($\Omega\neq 0$) described by (\ref{eq:rotate1})-(\ref{eq:rotate3}). Section 5.1 discusses typical 
parameters for Rossby waves on the Earth.  The Veronis (1958) Green's 
function (both for the $k_d=0$ and $k_d\neq 0$ cases) is investigated in Section 5.2 
and the rotating wind Green's function (\ref{eq:rotate1})-(\ref{eq:rotate3}) ($\Omega\neq 0$) 
is studied in Section 5.3. Below we discuss the form of the fluid vorticity for the above 2 cases. 

\begin{proposition}
The vorticity for the Green's function (\ref{eq:nowind1})-(\ref{eq:nowind2}) 
with no rotating wind ($\Omega=0$) and for the rotating wind case ($\Omega\neq 0$) given by 
(\ref{eq:rotate1})-(\ref{eq:rotate3}) has local fluid vorticity:
\begin{equation}
\boldsymbol{\omega}=\nabla\times {\bf u}=(0,0,\zeta)^T\quad \hbox{where}\quad \zeta=\psi_{xx}+\psi_{yy}, 
\label{eq:5.1}
\end{equation}
and $\psi$ is the stream function. The vorticity $\zeta$ can be written in the form:
\begin{equation}
\zeta=\psi_{xx}+\psi_{yy}=\int_0^\infty dk \frac{k^3}{k^2+k_d^2} J_0(D), \label{eq:5.2}
\end{equation}
for the rotating wind case where $D$ is given by (\ref{eq:rotate3}). In the case of no rotating wind ($\Omega=0$)
$D$ in (\ref{eq:5.2}) is replaced by $A$ which is given by (\ref{eq:nowind2}). 
\end{proposition}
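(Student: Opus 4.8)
The plan is to obtain (\ref{eq:5.2}) by applying the horizontal Laplacian $\nabla_\perp^2=\partial_x^2+\partial_y^2$ directly to the integral representation (\ref{eq:rotate1}) (resp.\ (\ref{eq:nowind1})), differentiating under the integral sign in $k$. The key structural observation is that, for each fixed $k$ and $t$, the Bessel argument $D$ of (\ref{eq:rotate3}) is exactly $k$ times a Euclidean distance in the $(x,y)$-plane: completing the square gives
\[
D^2=k^2\bigl[(x-x_c)^2+(y-y_c)^2\bigr],\qquad x_c=-\tfrac{a}{k}\sin(\Omega t),\quad y_c=-\tfrac{a}{k}\bigl(1-\cos(\Omega t)\bigr),
\]
with $a/k=\beta/[(k^2+k_d^2)\Omega]$; likewise (\ref{eq:nowind2}) gives $A^2=k^2[(x-x_c)^2+y^2]$ with $x_c=-\beta t/(k^2+k_d^2)$. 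In both cases the centre $(x_c,y_c)$ depends on $k$ and $t$ but not on $(x,y)$, so for the purpose of differentiating in $x,y$ at fixed $k$ it is simply a constant.

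I would then invoke the elementary identity $\nabla_\perp^2 J_0(k\rho)=-k^2 J_0(k\rho)$, valid for $\rho$ the distance from $(x,y)$ to any fixed point: with $\rho^2=(x-x_c)^2+(y-y_c)^2$ one has $\nabla_\perp^2 f(\rho)=f''(\rho)+\rho^{-1}f'(\rho)$, and for $f(\rho)=J_0(k\rho)$ the radial Bessel equation $J_0''(u)+u^{-1}J_0'(u)+J_0(u)=0$ evaluated at $u=k\rho$ yields the claim. Hence $\nabla_\perp^2 J_0(D)=-k^2J_0(D)$ (and $\nabla_\perp^2 J_0(A)=-k^2J_0(A)$), so that
\[
\zeta=\nabla_\perp^2\psi_G=-\frac{N}{2\pi}\int_0^\infty dk\,\frac{k}{k^2+k_d^2}\,\bigl(-k^2\bigr)J_0(D)=\frac{N}{2\pi}\int_0^\infty dk\,\frac{k^3}{k^2+k_d^2}\,J_0(D),
\]
which is (\ref{eq:5.2}) with the overall constant $N/(2\pi)$ restored (equivalently, absorbed into the normalisation), the no-wind case being identical with $D\to A$. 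As an independent consistency relation, $\nabla_\perp^2\psi^D=-k^2\psi^D$ combined with the fact noted after (\ref{eq:4.31}) that $\psi^D=J_0(D)$ solves the source-free equation (\ref{eq:4.32}) reduces the latter to $(k^2+k_d^2)\bigl(\partial_t+\Omega\partial_\phi\bigr)\psi^D=\beta\psi^D_x$, which is compatible with (\ref{eq:5.2}).

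The main obstacle is the justification of interchanging $\nabla_\perp^2$ with $\int_0^\infty dk$. The resulting integrand behaves like $k\,J_0(k\rho)\sim (k/\rho)^{1/2}\cos(k\rho-\pi/4)$ as $k\to\infty$, so the $k$-integral in (\ref{eq:5.2}) is \emph{not} absolutely convergent, and at the source point it is genuinely singular (indeed $\int_0^\infty k J_0(k\rho)\,dk=2\pi\,\delta^{(2)}(\mathbf{x})$ in the distributional sense). Away from $(x,y,t)=(0,0,0)$ the integral converges as an oscillatory improper integral, and the differentiation can be legitimised by a standard regularisation — insert a factor $e^{-\varepsilon k}$, differentiate under the now absolutely convergent integral, and let $\varepsilon\to0^+$ — together with non-stationary-phase estimates showing that the $x,y$-derivatives of the regularised integrals converge locally uniformly off the source. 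A separate delta-function analysis at the source point is then required, exactly as flagged for $\psi_G$ itself after (\ref{eq:3.18}). The remaining steps — the two completions of the square and the Bessel-equation computation — are routine.
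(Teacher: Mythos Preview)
Your proof is correct and follows essentially the same approach as the paper: differentiate under the $k$-integral, establish $\nabla_\perp^2 J_0(D)=-k^2 J_0(D)$ via Bessel's equation, and substitute back. Your observation that $D=k\rho$ for a shifted Euclidean radius $\rho$ is a mild streamlining of the paper's route, which instead computes $D_x,D_y,D_{xx},D_{yy}$ explicitly from (\ref{eq:rotate3}) to obtain $D_x^2+D_y^2=k^2$ and $D_{xx}+D_{yy}=k^2/D$; your treatment of the interchange of $\nabla_\perp^2$ with $\int_0^\infty dk$ is also more careful than the paper's, which simply assumes it.
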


\begin{proof}
We give the proof for the case $\Omega\neq 0$ for the Green's function (\ref{eq:rotate1})-(\ref{eq:rotate3}).
A similar proof for the case $\Omega = 0$ applies for the no wind case. We omit the proof 
for the $\Omega=0$ case. 

For the case $\Omega\neq 0$ we write:
\begin{equation}
\psi_G=-\frac{N}{2\pi} \int_0^\infty dk\frac{k}{k^2+k_d^2} \tilde{\psi}\quad\hbox{where}
\quad \tilde{\psi}=J_0(D), \label{eq:5.3}
\end{equation} 
and $D$ is given by (\ref{eq:rotate3}). Assuming that it is valid to interchange 
the order of integration and differentiation in (\ref{eq:5.3}) we obtain:
\begin{equation}
\zeta=\nabla_\perp^2\psi_G=-\frac{N}{2\pi} \int_0^k dk\frac{k}{k^2+k_d^2} \nabla_\perp^2\tilde{\psi}. 
\label{eq:5.4}
\end{equation}
Using the definition of $\tilde{\psi}=J_0(D)$ we obtain:
\begin{equation}
\nabla_\perp^2 \tilde{\psi}=J_0'(D)\left(D_{xx}+D_{yy}\right)+J_0''(D)\left(D_x^2+D_y^2\right). 
\label{eq:5.5}
\end{equation}
Using (\ref{eq:rotate3}) for $D$ we obtain:
\begin{align}
D_x=&\frac{k[kx+a\sin(\Omega t)]}{D}, \quad D_y=\frac{k[ky+a-a \cos(\Omega t)]}{D}, \nonumber\\
D_{xx}=&\frac{k^2}{D^3}\left(D^2-[kx +a\sin(\Omega t)]^2\right), \nonumber\\
D_{yy}=&\frac{k^2}{D^3}\left(D^2-[ky+a-a\cos(\Omega t)]^2\right). \label{eq:5.6}
\end{align}
Using (\ref{eq:5.6}) results in the equations:
\begin{equation}
D_x^2+D_y^2=k^2,\quad D_{xx}+D_{yy}=\frac{k^2}{D}. \label{eq:5.7}
\end{equation}
Substitution of (\ref{eq:5.7}) into (\ref{eq:5.5}) gives:
\begin{equation}
\nabla_\perp^2\tilde{\psi}(D)= k^2\left[J_0''(D)+\frac{1}{D} J_0'(D)\right]=-k^2 J_0(D). 
\label{eq:5.8}
\end{equation}
In deriving (\ref{eq:5.8}) we used the fact that $J_0(D)$ satisfies Bessel's equation:
(e.g. Abramowitz and Stegun (1965), Ch. 9, formula (9.1.1), p. 358). 
Substitution of (\ref{eq:5.8}) for $\nabla_\perp^2\tilde{\psi}$ in (\ref{eq:5.4}) gives the result 
(\ref{eq:5.2}) for $\zeta$.  This completes the proof.
\end{proof}

\subsection{Typical Parameters}
Typical values for Rossby waves on the Earth are: $\Omega_E=2\pi/(24\times 3600)\sim 7.2722\times 10^{-5}$
 ${\rm sec}^{-1}$. The Rossby number, $Ro$ which is the ratio of inertial to Coriolis acceleration
$\sim (U^2/L)/(2\Omega U\sim U/(2\Omega L)$. 
For a typical pressure field in the troposphere, with $L\sim 1000\ {\rm km}$ 
and $U\sim 20\ {\rm m s}{}^{-1}$, the Rossby number  $Ro\sim 0.1375$ (Pedlosky 1987, p. 3). 
The Rossby wave parameter
$\beta$ is defined as $\beta=2\Omega_E\cos\theta_0/R_E$. Using $\bar{R}_E\sim 6371 {\rm km}$ we obtain
$\beta\sim 2.2829\times 10^{-11}\ {\rm m}^{-1}\ {\rm s}^{-1}\cos\theta_0$. Thus at the equator 
$\beta\sim  2.2829\times 10^{-11}\ {\rm m}^{-1}\ {\rm s}^{-1}$ and at $\theta_0=60^{\circ}$ latitude
$\beta\sim 1.1414\times 10^{-11}\ {\rm m}^{-1}\ {\rm s}^{-1}$. For $L\sim 1000\ {\rm km}$ and a wind speed $V_w=80\ 
{\rm km}\ {\rm hr}^{-1}$ at $r=L$, gives $\Omega=V_w/L=2\times 10^{-5}\ {\rm s}^{-1}$ for the parameter $\Omega$, 
which is comparable to $\Omega_E$.  

There are two estimates commonly used to calculate the Rossby deformation radius 
(e.g. Gill (1982),  Vallis (2006)). One can use the so-called barotropic Rossby 
deformation radius $R_{d1}=\sqrt{gH}/f_0$ where $g$ is the gravitational acceleration and 
$f_0=2\Omega_E\sin\theta_0$ is the Coriolis parameter. The average height of the troposphere $H=17$ km, 
and the mean gravitational scale height $h=8.5$ km. If one uses the mean height of the troposphere, 
in the formula for $R_{d1}$ one obtains $R_{d1}=5.6515\times 10^3\ {\rm km}$, 
but if one uses the gravitational scale height ($h=\sqrt{kT/mg}$) then $R_{d1}=3971$ km. 
In some cases, the baroclinic Rossby deformation radius is thought to be more appropriate (e.g. 
Vallis, 2006), which is given by the formula  $R_{d2}=N h/(n \pi f_0)$, where $N$ is the Brunt-V\"ais\"al\"a 
frequency $N=\sqrt{g(1/h-g/c_s^2)}$ where $c_s$ is the sound speed, and  $n$ is a positive integer. 
For the Earth's atmosphere $N/f_0\sim 100$ ($N\sim 0.01 {\rm s}^{-1}$, $f_0\sim 10^{-4} {\rm s}^{-1}$), 
$h\sim 10{\rm km}$, 
and $R_{d2}\sim 1000$ km. The largest value of $k_d$ from these estimates is $k_d=10^{-6} {\rm m}^{-1}$.  

\subsection{Non-rotating wind Green's function: $\Omega=0$}

%figure 1
\begin{figure}[htt]
%\vspace*{-6cm}
%\hspace*{-1cm}
%\vspace*{1cm}
\begin{center}
\includegraphics[width=10cm]{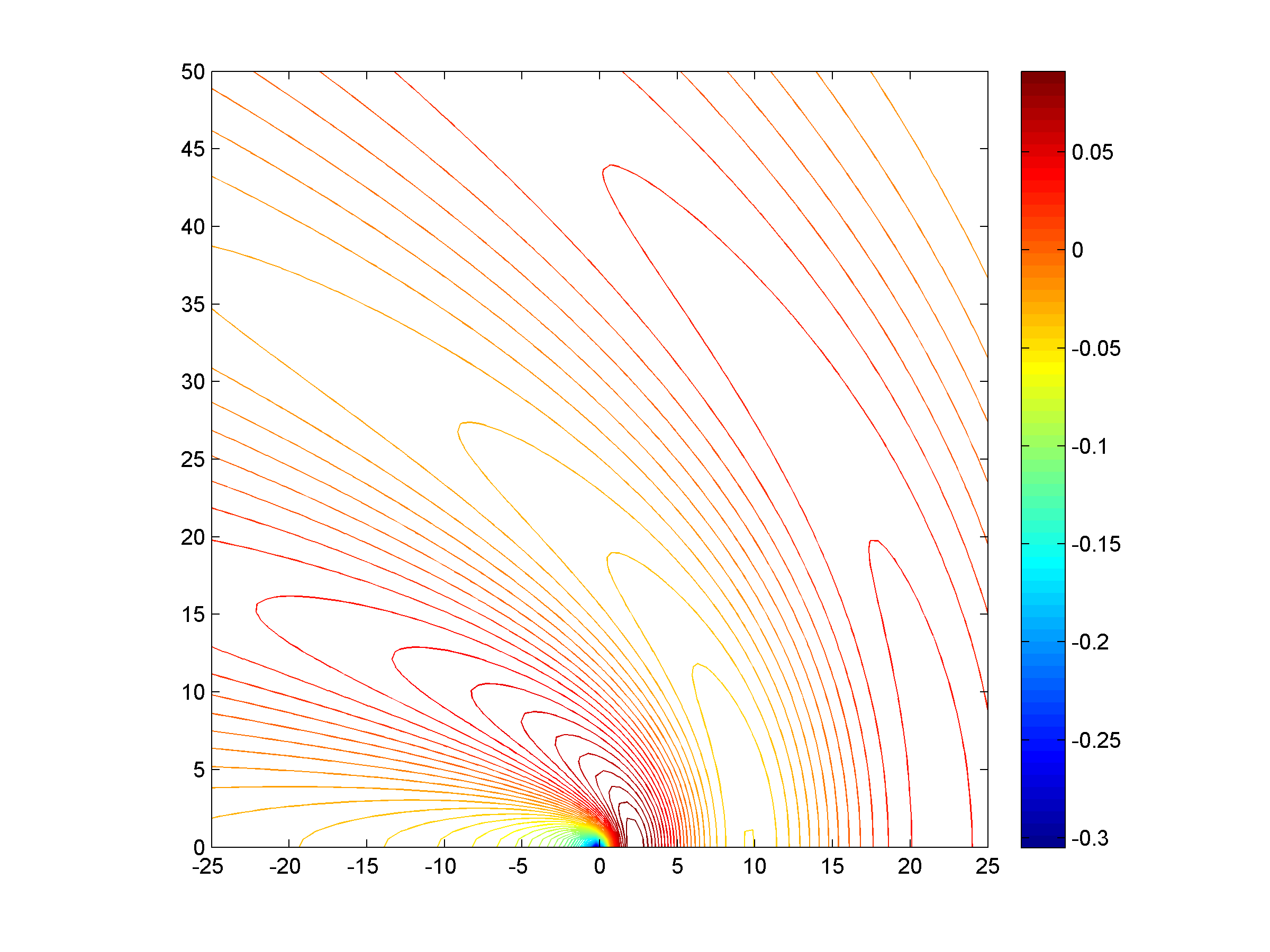}
%\vspace*{-4.5cm}
\caption{Contour plots of the stream function $\psi_G$ in the $\beta$\hyp{}plane 
for the Veronis Green's function for Rossby waves (\ref{eq:nowind1})-(\ref{eq:nowind2}) 
with an infinite Rossby deformation radius $R_d$ ($k_d=0$ case).}\label{fig:rossby1}
\end{center}
\end{figure}

%\clearpage
%figure 2
\begin{figure}[htt]
%\vspace*{-6cm}
%\hspace*{-1cm}
%\vspace*{1cm}
\begin{center}
\includegraphics[width=10cm]{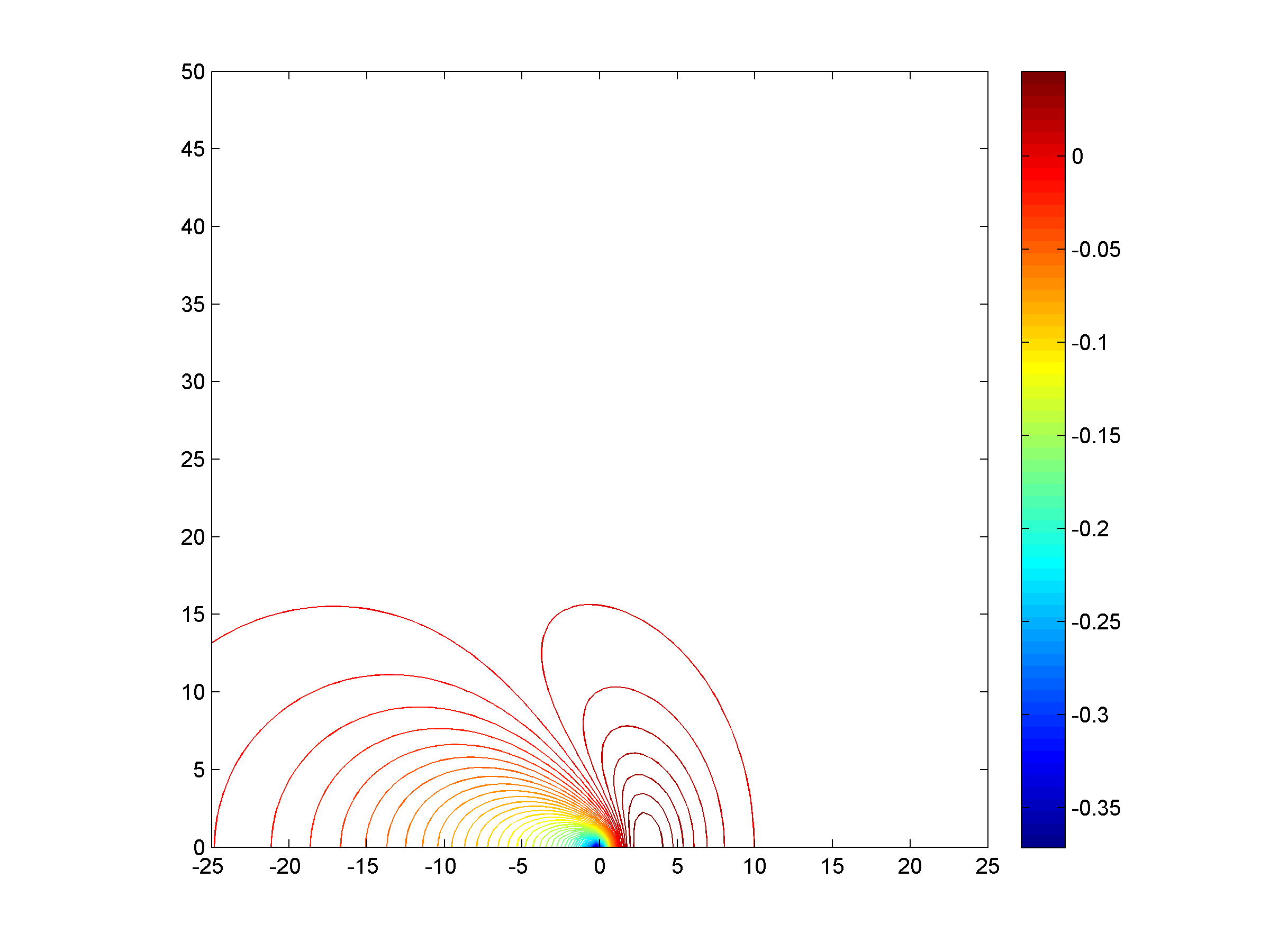}
%\vspace*{-4.5cm}
\caption{Contour plots of the stream function $\psi_G$ in the $\beta$\hyp{}plane, 
for the Green's function (\ref{eq:nowind1})-(\ref{eq:nowind2}) for Rossby waves  
 with a finite Rossby deformation radius ($\bar{k}_d=0.25$).}\label{fig:rossby2} 
\end{center}
\end{figure}

%figure 3
\begin{figure}[htt]
%\vspace*{-6cm}
%\hspace*{-1cm}
%\vspace*{1cm}
\begin{center}
\includegraphics[width=10cm]{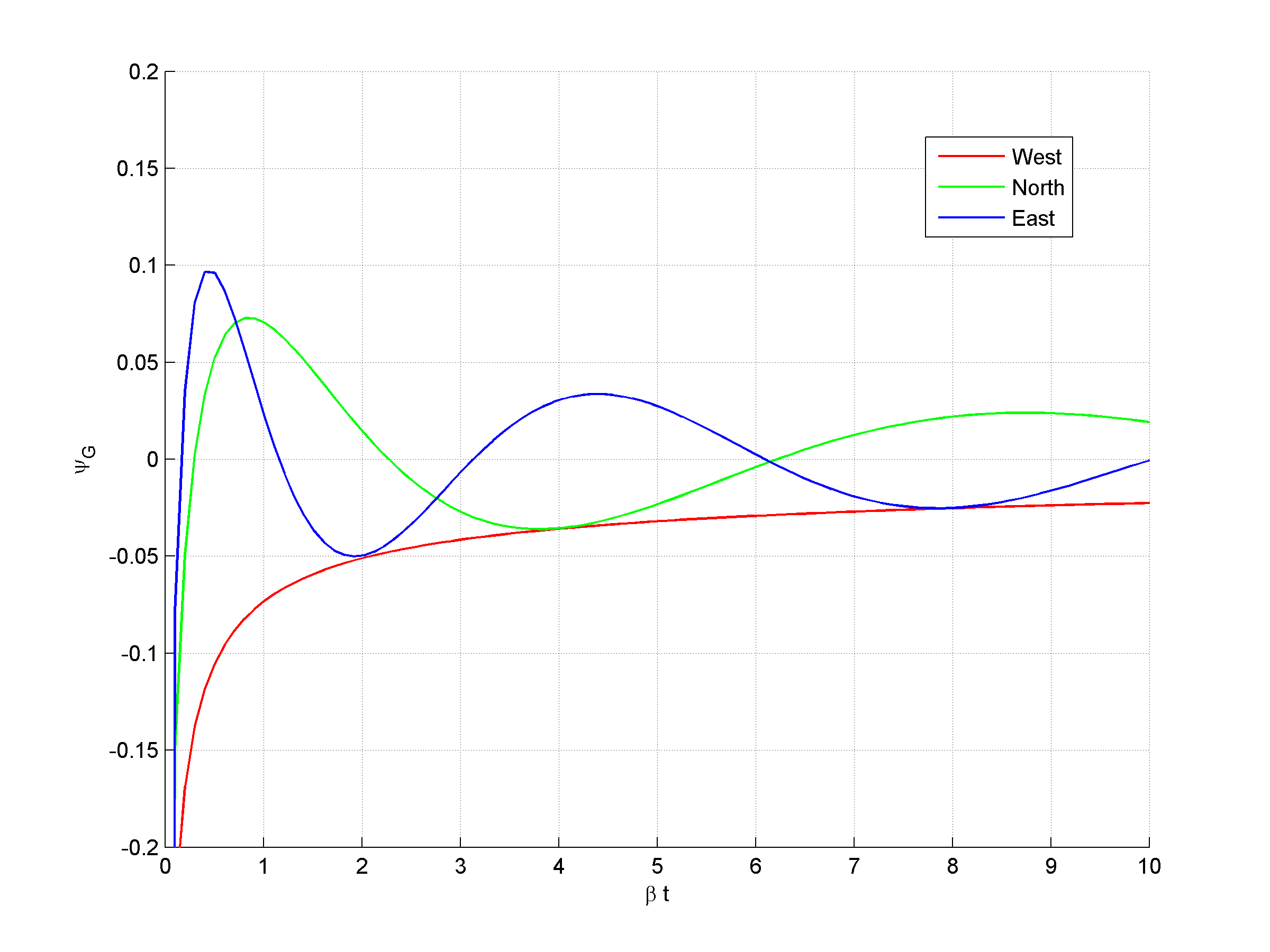}
%\vspace*{-4.5cm}
\caption{Stream function $\psi_G$ versus $t$ for the 
 Green's function (\ref{eq:nowind1})-(\ref{eq:nowind2}) for Rossby waves
at $\bar{r}=5$  with an infinite Rossby deformation radius $k_d=0$. 
By symmetry, the southern solution (not shown) 
is the same as the northern solution.}\label{fig:rossby3}
\end{center}
\end{figure}

%figure 4
\begin{figure}[htt]
%\vspace*{-6cm}
%\hspace*{-1cm}
%\vspace*{1cm}
\begin{center}
\includegraphics[width=10cm]{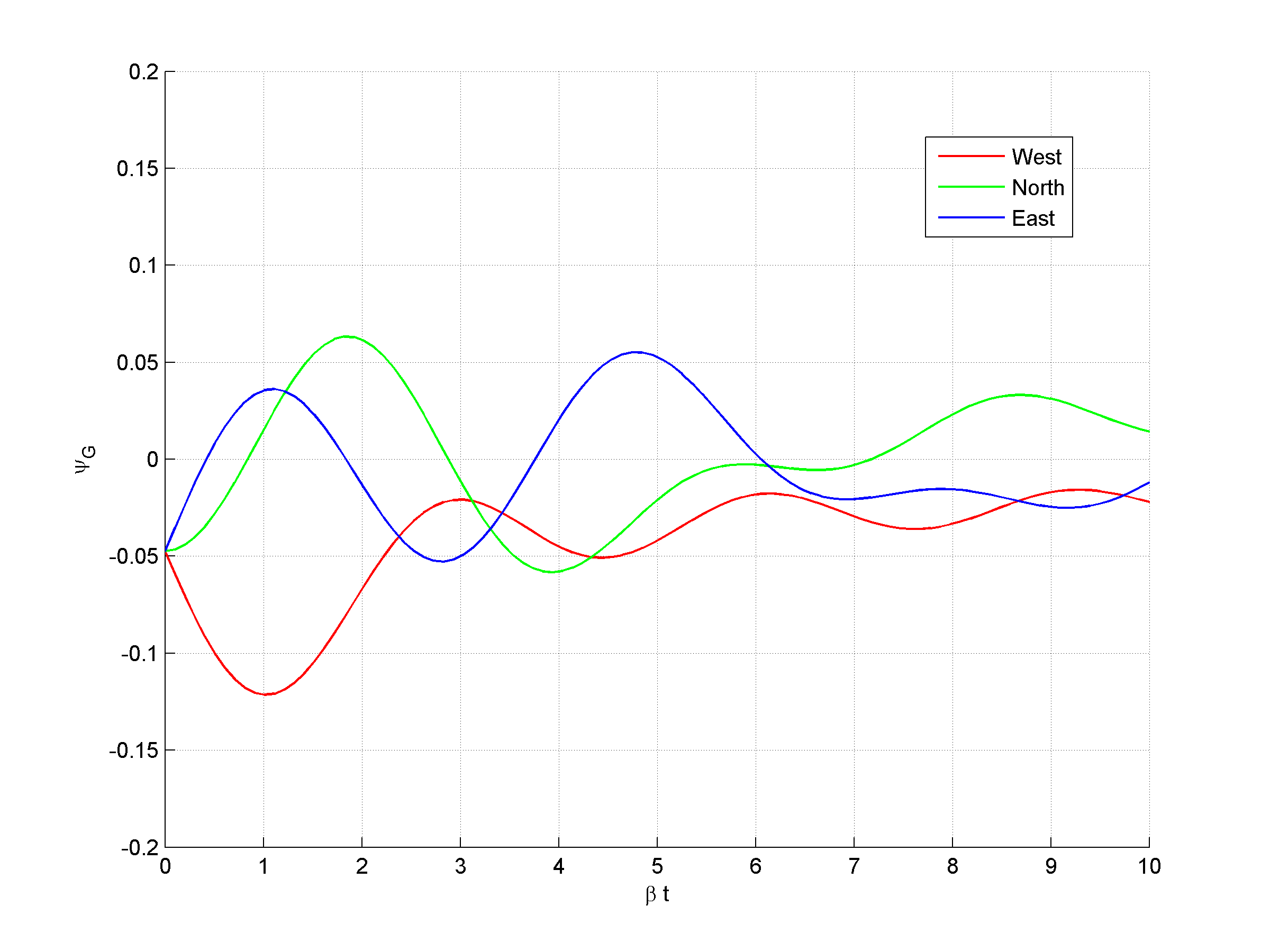}
%\vspace*{-4.5cm}
\caption{Stream function $\psi_G$ versus $t$ for the 
 Green's function (\ref{eq:nowind1})-(\ref{eq:nowind2}) for Rossby waves
at $\bar{r}=5$, with a finite Rossby deformation radius ($\bar{k}_d=0.25$). By symmetry, 
the southern solution (not shown) is the same as the northern solution.}\label{fig:rossby4} 
\end{center}
\end{figure}
%\clearpage

%figure 5
\begin{figure}[htt]
%\vspace*{-6cm}
%\hspace*{-1cm}
%\vspace*{1cm}
\begin{center}
\includegraphics[width=10cm]{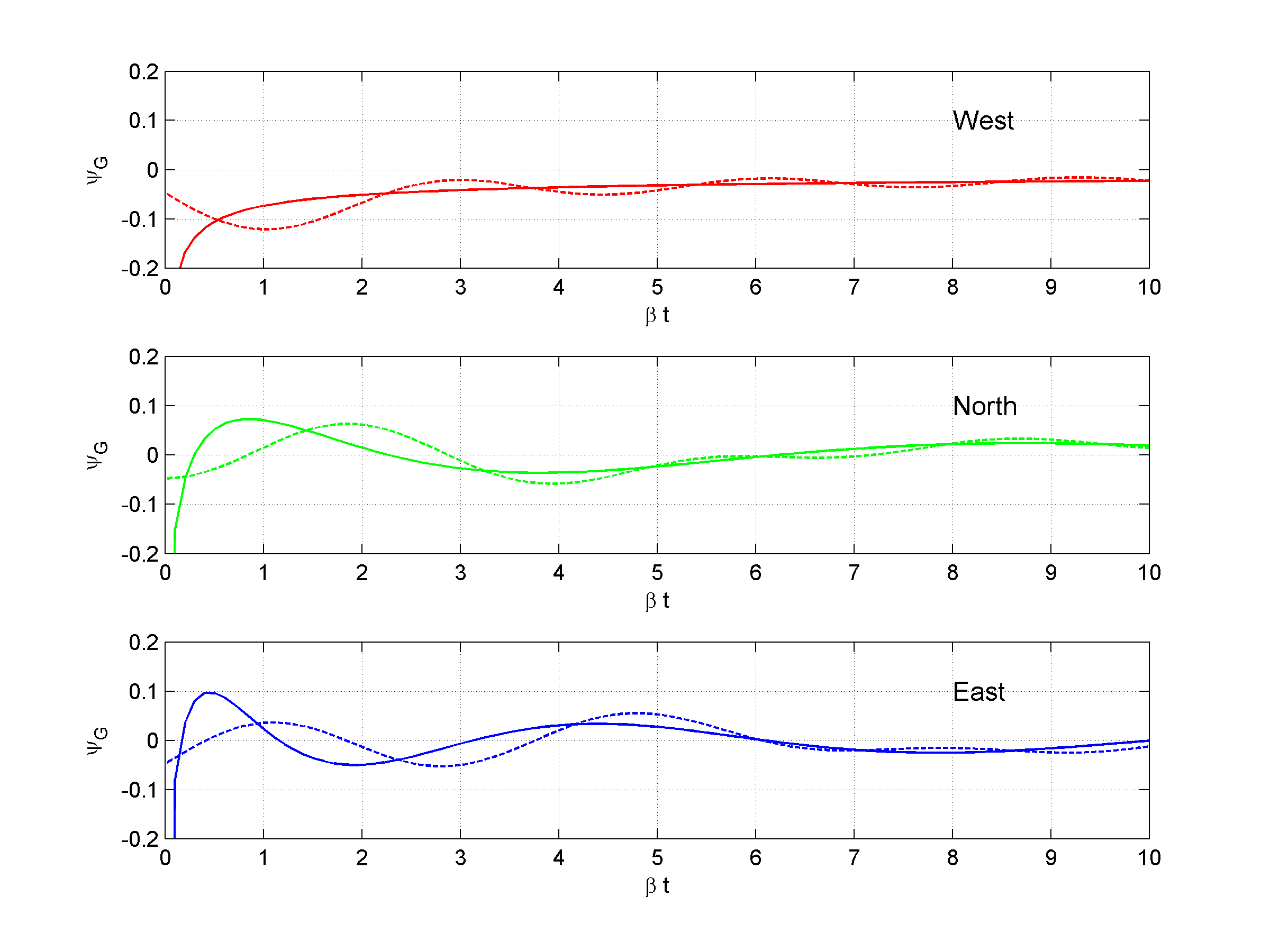}
%\vspace*{-4.5cm}
\caption{Stream function $\psi_G$ versus $t$ for the
 Green's function (\ref{eq:nowind1})-(\ref{eq:nowind2}) for Rossby waves at $\bar{r}=5$
 with a finite Rossby deformation radius ($\bar{k}_d=0.25$ dashed curves) compared to the 
infinite Rossby radius solutions ($k_d=0$ solid curves).}\label{fig:rossby5} 
\end{center}
\end{figure}
%\clearpage
%figure 6
\begin{figure}[ht]
%\vspace*{-6cm}
%\hspace*{-1cm}
%\vspace*{1cm}
\begin{center}
\includegraphics[width=16cm]{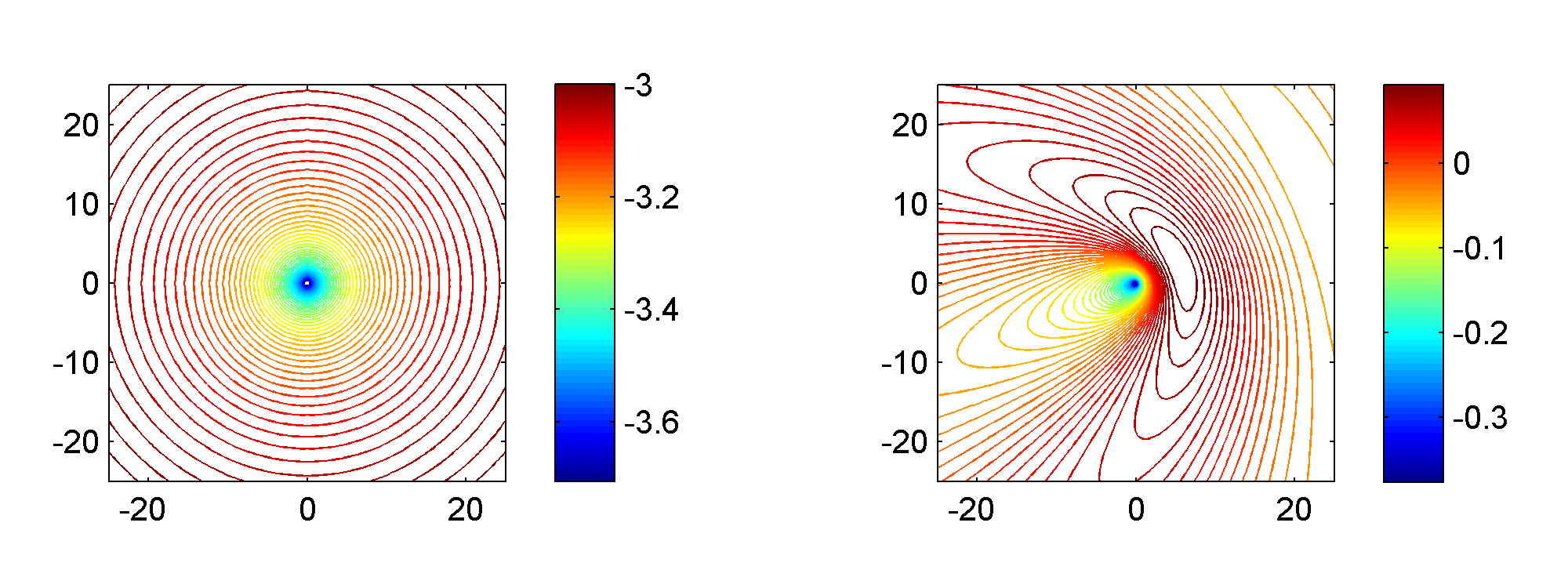}
%\vspace*{-4.5cm}
\caption{Stream function contours for $\psi_G$ in the $\beta$\hyp{}plane, for the
 Green's function (\ref{eq:5.11})-(\ref{eq:5.13}) for Rossby waves
in a rotating wind. The parameter $\bar{\beta}=\beta L/\Omega=0.5$,
describes both the rotation rate of the wind ($\Omega$) and the Coriolis
effect by the Rossby shear parameter $\beta$. The Rossby deformation radius
$R_d\to\infty$ ($\bar{k}_d=0$). The left panel is at time $t=0$
and right panel is at time $\Omega t=\pi/4$.}\label{fig:rossby6}
\end{center}
\end{figure}
%figure 7
\begin{figure}[ht]
%\vspace*{-1cm}
%\hspace*{-1cm}
%\vspace*{1cm}
\begin{center}
\includegraphics[width=16cm]{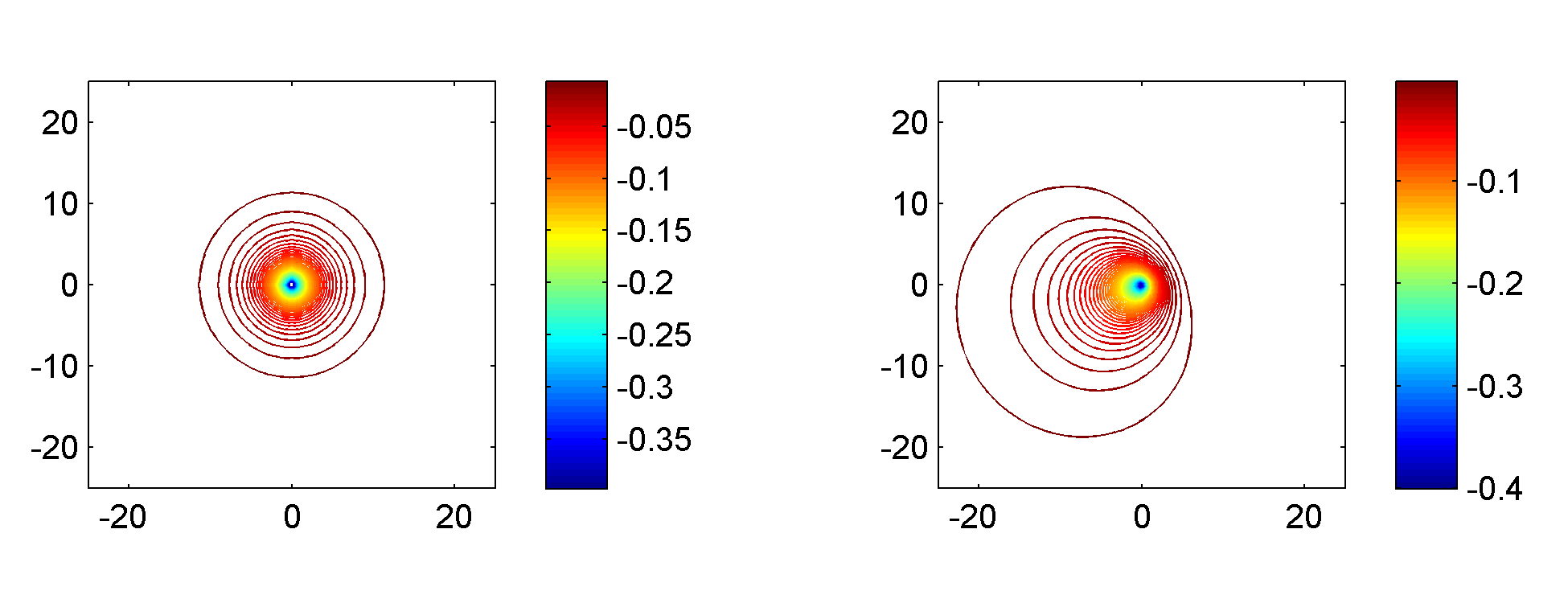}
%\vspace*{-4.5cm}
\caption{Stream function contours for $\psi_G$ in the $\beta$\hyp{}plane, for the
 Green's function (\ref{eq:5.11})-(\ref{eq:5.13}) for Rossby waves
in a rotating wind. The parameter $\bar{\beta}=\beta L/\Omega=0.5$,
describes both the rotation rate of the wind ($\Omega$) and the Coriolis
effect by the Rossby shear parameter $\beta$. The Rossby deformation radius
$R_d=1/k_d$ where ${\bar k}_d=0.25$ and $k_d=\bar{k_d}/L$.  The left panel is at time $t=0$
and right panel is at time $\Omega t=\pi/4$.}\label{fig:rossby7}
\end{center}
\end{figure}

The behavior of the Green's function should change according to whether (\romannumeral1)\ $r>\beta t/k_d^2$ or 
(\romannumeral2)\ $r<\beta t/k_d^2$. In the limit as $k_d\to 0$ 
(infinite Rossby deformation radius $R_d\to\infty$), the outer zone (\romannumeral1) does not exist. 
Thus we expect that a finite $R_d$ will be important at large distances $r$ from the source, 
whereas points very close  to the source will not be significantly affected by a finite $R_d$. 

Figure 1 shows a contour plot of the Veronis (1958) Green's function (\ref{eq:v11}) for the stream function 
$\psi$ in the normalized $(x,y)$\hyp{}plane (i.e. the $\beta$\hyp{}plane), at time $\beta t=1$. The $(x,y)$ coordinates can be written in the form:
\begin{equation}
x=\frac{\alpha\cos\phi}{\beta t}, \quad y=\frac{\alpha\sin\phi}{\beta t},\quad\alpha=\beta rt. \label{eq:5.9}
%hbox{where}\quad 
%\beta t=10^{-6}\hbox{m}^{-1}\quad\hbox{and}\quad N=1. \label{eq:5.9}
\end{equation}
We set $N=1$. If we choose the parameters:
\begin{equation}
\beta=10^{-11} \rm{m}^{-1} \rm{s}^{-1},\quad  L=(\beta T)^{-1}=1000\rm{km} \label{eq:5.10}
\end{equation}
then the plot corresponds to a time  $t\sim 10^5$ seconds, which is about a day
 ($1 \rm{day}=8.64\times 10^4$ seconds) and a distance $\Delta x=1$ corresponds to $1000\rm{km}$. 
 The contour plot of $\psi$ is similar to Figure 1 of Veronis (1958). 

%A three dimensional plot of $\psi(x,y,t)$ versus $(x,y)$ is given in Figure 2. 

Figure 2 shows a contour plot of $\psi_G$ in the $(x,y)$~-plane for the case  
 $\bar{k}_d=k_d L=0.25$, which for $L=1000\rm{km}$ basic scale length, corresponds to a Rossby deformation 
radius of  
$R_d=4000 {\rm km}$. In this case, the dividing boundary $r_b(t)=\beta t/k_d^2$ between the inner and 
outer region solutions is at $r_b=16,000{\rm km}$ for the parameters (\ref{eq:5.9}) with 
$\beta t=10^{-6} {\rm m}^{-1}$. 

 Figure 3 shows the time variation of $\psi_G$ for the Green's function (\ref{eq:nowind1})-(\ref{eq:nowind2}) 
for the infinite Rossby radius case ($k_d=0$) at $\bar{r}=r/L=5$, in the north, west and east diections. 
The variation in the south direction is the same as in the north direction, due to the north\hyp{}south
symmetry of the solution. The western solution profile is a monotonic increasing function of $\beta t$. 
The eastern  ($\phi=0$) and northern ($\phi=90^\circ$) solutions oscillate, with decreasing amplitude 
for large $t$, and are tangent to the western solution at specific times ($\beta t\approx 2$ and 
$\beta t\approx 8$ for the eastern solution and $\beta t=4$ for the northern solution). 
From (\ref{eq:v11}), the Veronis Green's function depends on $\alpha=\beta r t$ and $\gamma=\cos(\phi/2)$. 
Thus the $\beta t$ axis could be replaced by the $\beta t r$ axis divided by the constant 
$r$ value (i.e. $\bar{r}=5$). In other words, the plot in Figure 3, could also be re\hyp{}interpreted 
as showing the radial dependence of the solution for a fixed $t$. 

Figure 4 shows the time variation of $\psi_G$ for the Green's function (\ref{eq:nowind1})-(\ref{eq:nowind2})
versus $\beta t$ for a fixed $r$ ($\bar{r}=5$) for the case of a 
finite Rossby radius ($\bar{k}_d=0.25$). The figure 
shows the change in the solution compared to the infinite Rossby deformation radius case in Figure 3. 
The amplitude of $\psi_G$ is comparable to the infinite Rossby deformation radius case in figure 
3, but now all of solution curves (for the North, East, West cases) oscillate in time, whereas for the $k_d=0$  
case in Figure 3, the western solution curve is monotonic. 

Figure 5, shows the western, northern and eastern solution curves for the case $k_d=0$ (solid curves) 
as compared to the finite Rossby radius solutions ($\bar{k}_d=0.25$) again for $\bar{r}=5$. Notice 
how the $\bar{k}_d=0.25$ solutions oscillate about the $k_d=0$ solutions. The maxima and minima 
for $\bar{k}_d=0.25$ case, are now displaced from the origin ($\beta t=0$).

\subsection{Rotating wind Green's function}
 In this section, we investigate the rotating wind Green's function (\ref{eq:rotate1})-(\ref{eq:rotate3}). 
The solution, can be written as a dimensionless integral over $\bar{k}=kL$ where $L$ is 
a characteristic scale length, i.e.,
\begin{equation}
\psi_G=-\frac{N}{2\pi} \int_0^\infty d\bar{k} \frac{\bar{k}}{\bar{k}^2+\bar{k}_d^2} J_0(D), \label{eq:5.11}
\end{equation}
where
\begin{align}
D=&\left\{\left[\bar{k}\bar{x}+\bar{a}\sin\bar{t}\right]^2
+\left[\bar{k}\bar{y}+\bar{a}-\bar{a}\cos\bar{t}\right]^2\right\}^{1/2}, \label{eq:5.12}\\
\bar{a}=&\frac{\bar{k}\bar{\beta}}{\bar{k}^2+\bar{k}_d^2}, \quad \bar{\beta}=\frac{\beta L}{\Omega},\quad 
\bar{t}=\Omega t. \label{eq:5.13}
\end{align}
A typical value of the parameter $\bar{\beta}$  corresponds to $\Omega=2\times 10^{-5}\rm{s}^{-1}$, 
$L=1000\rm{km}$, and $\beta=10^{-11}{\rm m}^{-1}\rm{s}^{-1}$. This results in 
$\bar{\beta}=10^{-11}\times 10^6/(2\times 10^{-5})=0.5$ as a typical value for $\bar{\beta}$. Here, 
we take $\Omega=V_w/L$ corresponding to an azimuthal wind in the $\beta$\hyp{}plane in which 
$V_w=80 {\rm km}\ {\rm hr}^{-1}$ and $L=1000 {\rm km}$. 

Figure 6 (left panel) shows a contour plot of the Rossby wave Green's function (\ref{eq:5.11}) 
for the parameter values $\bar{\beta}=0.5$, $N=1$, at time $\bar{t}=0$ for the case $\bar{k}_d=0$ 
(infinite Rossby deformation radius), and the right panel, shows the solution at time $\Omega t=\bar{t}=\pi/4$.
The main point to note is the rotation of the contours at time $\bar{t}=\pi/4$ compared with the contours 
at time $\bar{t}=0$. Figure 7, shows  similar contour plots  of $\psi_G$ 
for the case $\bar{k}_d=0.25$ 
at time $\bar{t}=0$ (left panel), and   
 $\psi_G$ at time $\bar{t}=\Omega t=\pi/4$ (right panel). The case with $k_d=0.25$ is similar 
to the contour plots for $k_d=0$ (Figure 6) except that the contours are more confined and compressed
than the $k_d=0$ contours.

%section6
\section*{6.\ Concluding Remarks}
In this paper we obtained Green's function solutions for Rossby waves on a rotating planet, 
using the $\beta$\hyp{}plane approximation. The solutions were obtained for the case 
in which the background flow includes a local rotating wind (e.g. McKenzie and Webb (2015)), 
in which the local background fluid velocity ${\bf U}=U_\phi {\bf e}_\phi=r\Omega {\bf e}_\phi$
in the  $\beta$\hyp{}plane. The local $(x,y)$ coordinates in the $\beta$\hyp{}plane  are Cartesian 
coordinates corresponding to the east ($x$) and north ($y$) directions where $x=r\cos\phi$, $y=r\sin\phi$,
and ${\bf e}_\phi=(-\sin\phi,\cos\phi,0)$ and ${\bf e}_z=(0,0,1)^T$ is the local vertical direction 
at latitude $\theta$, and $\Omega$ is assumed constant.  

We first considered  the Green's function for the case of no local wind ($\Omega=0$), with  a delta 
function source term  $Q=N\delta(x)\delta(y)\delta(t)$ in the Rossby wave vorticity equation (\ref{eq:2.6})
(Section 2). This Green's  function was studied by Veronis (1958) who was interested in 
oceanic Rossby waves. We obtain a different, but equivalent  form for the Green's function. 
 Veronis (1958) only studied in detail the Green's 
function for the case of an infinite Rossby deformation radius 
($k_d=0$ or $R_d\to\infty$) for which the solution could be reduced 
to a tractible form in terms of an integral involving a Bessel function of the first kind 
of order zero. Our analysis gives the Green's function, in a tractible form, for both 
$k_d\neq 0$ and for $k_d=0$. We show the equivalence of our Green's function 
to the Veronis form of the Green's function for $k_d=0$ (Appendices B and C), 
 after a typographical error in the Veronis solution is accounted for. 
The difference in the two forms of the solutions occurs because of the different integration
variables used in the two formulations. 

A similar Fourier transform method was used to determine the Green's function $\psi$ for the 
stream function in Section 4, for 
the case of a local rotating wind in the $\beta$\hyp{}plane, with constant angular velocity 
$\Omega$ about the local vertical direction (the $z$\hyp{}axis), 
in which the wind rotates from east to north. The rotation operator about the local $z$ axis, under Fourier 
transformation, transforms to the rotation operator $-\Omega(\partial/\partial\Phi)$ in ${\bf k}$ 
space, where ${\bf k}=(k_x,k_y)=k\left(\cos\Phi,\sin\Phi\right)$ is the wave number, and $\Phi$ 
is the azimuthal angle in ${\bf k}$\hyp{}space. The problem reduces to solving a first order ordinary 
differential equation for the transform $\bar{\psi}$, where the independent 
variable is the azimuthal angle $\Phi$  in ${\bf k}$\hyp{}space. The required solution in ${\bf k}$\hyp{}space
is obtained by requiring that the solution is $2\pi$\hyp{}periodic in $\Phi$. The inversion of the transform 
$\bar{\psi}$  is carried out using the generating function identity for Bessel functions of the 
first kind, listed in Abramowitz et al, (1965) and by using a Neumann series identity for the sum 
of products of Bessel functions (Gradshteyn and Ryzhik, 2000). The solution is given as an integral over $k$ and involves 
a zero order Bessel function of argument $D$ (i.e. $J_0(D)$), in which $D$ depends on 
$x$, $y$, $t$ and $\Omega$. In the limit as $\Omega\to 0$ the solution reduces to the Green's function 
solution of section 3, in which $\Omega=0$. 

Illustrative solutions of the Rossby wave Green's function for the case of no rotating wind 
($\Omega=0$) and for the rotating wind case ($\Omega\neq 0$) were given in section 5. The stream function 
$\psi_G$ for an infinite Rossby deformation radius ($k_d=0$) obtained by Veronis (1958) was compared 
with the Rossby wave solution with $R_d=4000km$ ($\bar{k_d}=0.25$) (Figures 1 and 2). 
The direction of the fluid velocity is parallel to the contours of the stream function in these 
examples (${\bf u}=(-\psi_y,\psi_x)$ in the $\beta$ plane). The Green's function solutions for 
the stream function $\psi_G$ versus the time parameter $\beta t$ for both the $k_d=0$ 
and $\bar{k}_d=0.25$ and $\bar{r}=5$,  were displayed in Figures 3-5. In the infinite Rossby wave 
deformation radius limit ($k_d=0$), $\psi_G$ has the functional form $\psi_G=\psi_G(\beta r t,\cos(\phi/2))$
(see Figure 3). The solution for $\psi_G$ for $k_d\neq0$ (finite $R_d$) has the functional form:
$\psi_G=\psi_G(\beta r t,k_d r,\cos(\phi/2))$. Figure 5 shows that the solution for $\bar{k}_d=0.25$
oscillates about the $k_d=0$ solution. The solution is symmetric about the east\hyp{}west axis (i.e. the 
northern solution is the same as the southern solution). 

Figure 6 shows contour plots of $\psi_G$ for the rotating wind Green's function  
(\ref{eq:rotate1})-(\ref{eq:rotate3}) at time $t=0$ and at time $\Omega t=\pi/4$.
The initial Green's function at time $t=0$ is circularly symmetric in the $\beta$\hyp{}plane, 
but is rotated and skewed at time $\Omega t=\pi/4$. The contours are symmetric about the 
oblique symmetry axis, are flattened to the East, and 
elongated to the west. The corresponding rotating wind 
Green's function for a finite Rossby deformation radius (${\bar k}_d=0.25$) at times $t=0$ and 
$\Omega t=\pi/4$ are shown in Figure 7. These figures give a brief overview of the rotating wind 
Green's function, but further details of the solution for $\psi_G$ remain to be explored.

\section*{Acknowledgements}
We acknowledge stimulating discussions with the late J.F. McKenzie, who first suggested  
investigating the rotating wind Rossby wave problem. 
 GMW was supported in part by NASA grant NNX15A165G.

\appendix
\section*{Appendix A}
\setcounter{section}{1}
\setcounter{equation}{0}
%\renewcommand{\theequation}{A.\arabic{equation}}
%\chapter{}

In this appendix we derive the solution (\ref{eq:3.25}), $S=A$, of the wave eikonal equation (\ref{eq:3.23}) 
by the method of characteristics (e.g. Sneddon (1957), Courant and Hilbert (1989), Vol. 2).
We first write (\ref{eq:3.23}) in the form:
\begin{equation}
F=r(p^2+q^2+k_d^2)-\beta p=0, \label{eq:A1}
\end{equation}
where
\begin{equation}
r=S_t,\quad p=S_x,\quad q=S_y. \label{eq:A2}
\end{equation}
The characteristics of nonlinear, first order partial differential equations of the 
form (\ref{eq:A1}) (Sneddon (1957), Courant and Hilbert (1989)) are given by:
\begin{align}
\frac{dt}{d\tau}=&F_r,\quad \frac{dx}{d\tau}=F_p,\quad \frac{dy}{d\tau}= F_q, 
\quad \frac{dS}{d\tau}=r F_r+p F_p+ q F_q,\nonumber\\
\frac{dr}{d\tau}=&-\left(F_t+r F_S\right),\quad \frac{dp}{d\tau}=-(F_x+p F_S), \quad 
\frac{dq}{d\tau}=-(F_y+q F_S), \label{eq:A3}
\end{align}
where $F_\phi=\partial F/\partial\phi$ is the partial derivative of $F$ with respect to $\phi$, 
and $\tau$ is a parameter along the characteristics. 

Using (\ref{eq:A1}) to evaluate the derivatives in (\ref{eq:A3}), we obtain:
\begin{equation}
\frac{dr}{d\tau}=0,\quad \frac{dp}{d\tau}=0,\quad \frac{dq}{d\tau}=0, \label{eq:A4}
\end{equation}
with solutions:
\begin{equation}
r=r_0,\quad p=p_0,\quad q=q_0, \label{eq:A5}
\end{equation}
where $r_0$, $p_0$ and $q_0$ are constants on the characteristics, subject to the 
partial differential equation constraint:
\begin{equation}
r_0(p_0^2+q_0^2+k_d^2)-\beta p_0=0, \label{eq:A6}
\end{equation}
i.e. $(r_0,p_0,q_0)$ must satisfy the original partial differential equation (\ref{eq:A1}). 

Computing the remaining derivatives in (\ref{eq:A3}) gives:
\begin{align}
\frac{dx}{d\tau}=&2rp-\beta,\quad \frac{dy}{d\tau}=2rq,\quad \frac{dt}{d\tau}=p^2+q^2+k_d^2, \nonumber\\
\frac{dS}{d\tau}=&p(2pr-\beta)+q(2rq)+r(p^2+q^2+k_d^2)=2r(p^2+q^2)+F\equiv 2r(p^2+q^2), \label{eq:A7}
\end{align}
where we have used the fact that $F=0$ on the characteristics. Because $r$, $p$ and $q$ are constants
on the characteristics, the integration of (\ref{eq:A7}) is straightforward. We obtain the integrals:
\begin{align}
t=&(p_0^2+q_0^2+k_d^2)\tau,\quad x=(2 r_0 p_0-\beta)\tau,\nonumber\\
y=&2 r_0 q_0\tau,\quad S=2r_0(p_0^2+q_0^2)\tau, \label{eq:A8}
\end{align}
where we have chosen the integration constants such that $t$, $x$, $y$ and $S$ are all zero at $\tau=0$. 
From (\ref{eq:A8}) we obtain the equations:
\begin{align}
x+\beta\tau=&2 r_0 p_0\tau,\quad y=2r_0 q_0\tau,\nonumber\\
S^2=&4 r_0^2\left(p_0^2+q_0^2\right)^2\tau^2,\quad p_0^2+q_0^2=k^2. \label{eq:A9}
\end{align}
The last equation in (\ref{eq:A9}) states that $S_x^2+S_y^2=k^2$. Also note from (\ref{eq:A9}) that:
\begin{equation}
k^2\left[(x+\beta\tau)^2+y^2\right]=\left(p_0^2+q_0^2\right)\left[4 r_0^2(p_0^2+q_0^2)\tau^2\right]
= S^2. \label{eq:A10}
\end{equation}
Finally using (\ref{eq:A8}), (\ref{eq:A10}) reads:
\begin{equation}
S^2=(k x+\tilde{\beta} t)^2+k^2 y^2, \label{eq:A11}
\end{equation}
which is the wave eikonal solution (\ref{eq:3.25}), which shows that (\ref{eq:3.25}) can be obtained 
by integrating the Cauchy characteristics.

%\apppendix
\section*{Appendix B}
\setcounter{section}{2}
\setcounter{equation}{0}
In this appendix we discuss other forms of the Veronis (1958) Rossby wave 
Green's function, obtained for the case of an infinite Rossby deformation radius
($k_d\to 0$, $R_d=1/k_d\to\infty$). In the limit as 
$k_d\to 0$, the Green's function $\psi_G$ in (\ref{eq:v10})
reduces to:
\begin{equation}
\psi_G=-\frac{N}{2\pi} \int_{c-i\infty}^{c+i\infty} \frac{ds}{2\pi i}
\exp\left(st-\frac{\beta x}{2s}\right) \frac{1}{s} K_0\left(\frac{\beta r}{2s}\right), \label{eq:B1}
\end{equation}
where $N$ is the normalization constant 
for the Green's function with source $Q=N\delta (x)\delta (y)\delta (t)$.
We split the transform:
\begin{equation}
\bar{\psi}(s)=-\frac{N}{2\pi} \exp\left(-\frac{\beta x}{2s}\right)\frac{1}{s} 
K_0\left(\frac{\beta r}{2s}\right), 
\label{eq:B2} 
\end{equation}
into two separate functions as:
\begin{align}
\bar{\psi}(s)=&-\frac{N}{2\pi} F_1(s)F_2(s),\nonumber\\ 
F_1(s)=&\frac{1}{\sqrt{s}}\exp\left[-\frac{\beta(x+r)}{2s}\right],\nonumber\\
F_2(s)=&\frac{1}{\sqrt{s}}\exp\left(\frac{\beta r}{2s}\right) 
K_0\left(\frac{\beta r}{2s}\right). \label{eq:B3}
\end{align}
The inverse Laplace transform of (\ref{eq:B1}) may be written as a convolution integral:
\begin{align}
\psi_G=&-\frac{N}{2\pi} \int_0^t f_2(\tau)f_1(t-\tau)\ d\tau\label{eq:B4}\\
\equiv&-\frac{N}{2\pi} \int_0^t \left\{\frac{2}{\sqrt{\pi\tau}} K_0\left(2\sqrt{\beta r\tau}\right)\right\}
\left\{\frac{1}{\sqrt{\pi(t-\tau)}}\cos\left(\sqrt{2\beta (x+r)(t-\tau)}\right)\right\} d\tau. \label{eq:B5}
\end{align}
 (\ref{eq:B5}) for $\psi_G$ is the Green's function formula (20) of Veronis (1958). 
The expressions for $f_2(\tau)$ and $f_1(t-\tau)$ in (\ref{eq:B5}) follow from the Laplace transforms:
\begin{align}
{\cal L}^{-1}\left[\frac{1}{\sqrt{p}}\exp\left(-\frac{\alpha}{p}\right)\right]
=&\frac{1}{\sqrt{\pi t}}
\cos\left(2\sqrt{\alpha t}\right), \nonumber\\
{\cal L}^{-1}\biggl[\frac{1}{\sqrt{p}}
\exp\left(\frac{\alpha}{p}\right)
K_0\left(\frac{\alpha}{p}\right)\biggr]
=&\frac{2}{\sqrt{\pi t}} K_0\left(\sqrt{8\alpha t}\right), \label{eq:B6}
\end{align}
where ${\cal L}^{-1}$ denotes the inverse Laplace transform operation [e.g. Erdelyi et al. (1954), 
Vol. 1, formula (37), p.345; and formula (33), p282]. 
%We obtain:
%\begin{equation}
%\psi_G=-\frac{N}{2\pi} \int_0^t \frac{2}{\pi}\frac{1}{\sqrt{\tau}} K_0\left(2\sqrt{\beta r \tau}\right) 
%\frac{1}{\sqrt{(t-\tau)}}\cos\left(\sqrt{2(x+r)\beta (t-\tau)}\right), \label{eq:B7}
%\end{equation}
%which is equivalent to equation (20) of Veronis (1958). 

By using the transformations:
\begin{equation}
\tau=t\sin^2\Theta,\quad x=r\cos\phi, \label{eq:B8}
\end{equation}
and by introducing the variables:
\begin{equation}
\alpha=\beta r t,\quad \gamma=\cos(\phi/2), \label{eq:B9}
\end{equation}
(note $x+r=2r\gamma^2$), the convolution integral (\ref{eq:B5}) reduces to:
\begin{align}
\psi_G=&-\frac{2N}{\pi^2}\int_0^{\pi/2} K_0(\sqrt{4\alpha}\sin\Theta)\cos(2\sqrt{\alpha}\gamma\cos\Theta)\ d\Theta
\nonumber\\ 
=&-\frac{N}{\pi^2}\int_0^{\pi} K_0(\sqrt{4\alpha}\sin\Theta)\cos(2\sqrt{\alpha}\gamma\cos\Theta)\ d\Theta.
\label{eq:B10}
\end{align}
Using Basset's integral:
\begin{equation}
K_0(r\delta)=\int_0^\infty \frac{\cos(ru)}{\sqrt{u^2+\delta^2}} du, \label{eq:B11}
\end{equation}
(Abramowitz and Stegun (1965), formula 9.6.21, p. 376, with $x\to r\delta$, $t\to u/\delta$, see also 
Veronis (1958)], we obtain:
\begin{equation}
K_0(\sqrt{4\alpha}\sin\Theta)=\int_0^\infty \frac{\cos(u\sin\Theta)}{\sqrt{u^2+4\alpha}} du, \label{eq:B12}
\end{equation}
as an integral representation of the $K_0$ Bessel term in (\ref{eq:B10}). 

Using (\ref{eq:B12}) in (\ref{eq:B10}) and switching the order of integration gives:
\begin{equation}
\psi_G=-\frac{N}{\pi^2}\int_0^\infty du\ \frac{g(u,\alpha,\gamma)}{\sqrt{u^2+4\alpha}}, \label{eq:B13}
\end{equation}
and 
\begin{align}
g(u,\alpha,\gamma)=&\int_0^{\pi} \cos(u\sin\Theta)\cos(2\sqrt{\alpha}\gamma\cos\Theta)\ d\Theta\nonumber\\
\equiv&\frac{1}{2}\int_0^{\pi}\biggl\{\cos(u\sin\Theta-2\sqrt{\alpha}\gamma\cos\Theta)
+\cos(u\sin\Theta+2\sqrt{\alpha}\gamma\cos\Theta)\biggr\}\ d\Theta\nonumber\nonumber\\
=&\frac{1}{2}\int_0^{\pi}\left\{\cos[A\sin(\Theta-\epsilon)]+\cos[A\sin(\Theta+\epsilon)]\right\}\ d\Theta.
\label{eq:B14}
\end{align}
where
\begin{equation}
A\cos\epsilon=u,\quad A\sin\epsilon=2\sqrt{\alpha}\gamma, \quad
A=\left(u^2+4\gamma^2\alpha\right)^{1/2}. 
\label{eq:B15}
\end{equation}
Using the Bessel function generating identity in the form:
\begin{equation}
\cos\left(z\sin\theta\right)=\sum_{n=-\infty}^\infty J_n(z) \cos(n\theta), \label{eq:B16}
\end{equation}
with $z\to A$ and $\theta\to\Theta\pm \epsilon$, (\ref{eq:B14}) reduces to:
\begin{align}
g(u,\alpha,\gamma)=&\frac{1}{2}\int_0^{\pi} \sum_{n=-\infty}^{\infty} J_n(A)
\left\{\cos[n(\Theta+\epsilon)]+\cos[n(\Theta-\epsilon)]\right\} d\Theta\nonumber\\
=&\sum_{n\neq 0} J_n (A)\cos(n\epsilon)\frac{\sin(n\pi)}{n}+\pi J_0(A)
\equiv \pi J_0(A), 
\label{eq:B17}
\end{align}
(note that $\sin(n\pi)=0$).

Using (\ref{eq:B17}) in (\ref{eq:B13}) gives:
\begin{align}
\psi_G=&-\frac{N}{\pi}\int_0^\infty \frac{du}{\sqrt{u^2+4\alpha}} 
J_0\left[\sqrt{u^2+4\gamma^2\alpha}\right]\nonumber\\
\equiv&-\frac{N}{\pi}\int_0^\infty \frac{dz}{\sqrt{1+z^2}} 
J_0\left[2\sqrt{\alpha}\left(z^2+\gamma^2\right)^{1/2}\right],  \label{eq:B20}
\end{align}
where we used the transformation of variables $z=u/\sqrt{4\alpha}$. The solution (\ref{eq:B20}) 
is essentially (21) of Veronis (1958), except $\sqrt{\alpha}$ is replaced by $\alpha$ 
in his equation (21). This means that $\alpha$ must be replaced by $\sqrt{\alpha}$ in Veronis's 
equation (22).

\section*{Appendix C}
\setcounter{section}{3}
\setcounter{equation}{0}
In this appendix we show that the Green's function $\psi_V$ obtained by Veronis for the case $k_d=0$,
given in (\ref{eq:v11}) is equivalent to the Green's function in proposition \ref{prop31} 
and in (\ref{eq:nowind1}) and (\ref{eq:nowind2}). From (\ref{eq:v11}), 
\begin{equation}
\psi_V=-\frac{N}{\pi}\int_0^\infty \frac{dz}{\sqrt{1+z^2}} 
J_0\left[2\sqrt{\alpha}\left(z^2+\gamma^2\right)^{1/2}\right]. \label{eq:C1}
\end{equation}
This is equivalent to the Green's function (\ref{eq:nowind1})-(\ref{eq:nowind2}) with $k_d=0$:
\begin{equation}
\psi_G=-\frac{N}{2\pi}\int_0^\infty \frac{dk}{k} J_0(A), \label{eq:C2}
\end{equation}
where
\begin{equation}
A=\left[\left(k x+\frac{\beta t}{k}\right)^2+k^2 y^2\right]^{1/2}. \label{eq:C3}
\end{equation}
In the Veronis (1958) paper, $\sqrt{\alpha}\to \alpha$ in (\ref{eq:C1}). This is a typographical 
error in Veronis' equation (21).
One expects that there is a transformation of the integration variable, that maps the Green's function 
(\ref{eq:C1}) onto the Green's function (\ref{eq:C2})-(\ref{eq:C3}), in which the argument of the 
$J_0$ Bessel function does not change. This will be the case, if:
\begin{equation}
\left[4\alpha(z^2+\gamma^2)\right]^{1/2}
=\left[\left(kx+\frac{\beta t}{k}\right)^2+k^2 y^2\right]^{1/2}\equiv A. 
\label{eq:C4}
\end{equation}
Squaring (\ref{eq:C4}) gives:
\begin{equation}
4\alpha\left(z^2+\gamma^2\right)
=k^2 r^2 +\left(\frac{\beta t}{k}\right)^2+2\beta t x. \label{eq:C5}
\end{equation}
Noting that
\begin{equation}
4\alpha\gamma^2\equiv 4\beta r t\gamma^2=4\beta r t\cos^2(\theta/2)= 2\beta t(x+r), \label{eq:C6}
\end{equation}
and  using (\ref{eq:C6}) and (\ref{eq:C5}) gives the equation:
\begin{equation}
\left(kr-\frac{\alpha}{kr}\right)^2=4\alpha z^2. \label{eq:C7}
\end{equation}
Taking the square root of (\ref{eq:C7}) gives:
\begin{equation}
\sqrt{4\alpha} z=\sigma \left(kr-\frac{\alpha}{kr}\right)\quad\hbox{where}\quad \sigma=\pm1 . 
\label{eq:C8}
\end{equation}
An inspection of (\ref{eq:C8}) reveals that $z=0$ when $k=k_c$ where
\begin{equation}
k_c r=\sqrt{\alpha}. \label{eq:C9}
\end{equation}
By sketching the graph of $z$ versus $kr$ in (\ref{eq:C8}) shows that it is necessary to choose the 
$\sigma=1$ branch for $k>k_c$ and to choose the $\sigma=-1$ branch for $0<k<k_c$. These choices ensure
that $z$ is positive (i.e. $0<z<\infty$). Thus, transformations (\ref{eq:C8}) can be written as:
\begin{equation}
\sqrt{4\alpha}z=\biggl\{\begin{array}{cccc}
\alpha/(kr)-kr&\hbox{if}& k<k_c&(\sigma=-1),\\
kr-\alpha/(kr)&\hbox{if}&k>k_c&(\sigma=1).
\end{array}\biggr. \label{eq:C9a}
\end{equation}

Equation (\ref{eq:C8}) can be expressed as a quadratic equation for $kr$ as:
\begin{equation}
(kr)^2-\sigma \sqrt{4\alpha} z (kr)-\alpha=0, \label{eq:C10}
\end{equation}
with solutions:
\begin{equation}
kr=\sqrt{\alpha}\left(\sigma z\pm \sqrt{z^2+1}\right). \label{eq:C11}
\end{equation}
Since we require $kr>0$ and $z>0$ in (\ref{eq:C1}) and (\ref{eq:C2}) we choose the transformations:
\begin{equation}
kr=\biggl\{\begin{array}{ccc}
\sqrt{\alpha}(\sqrt{z^2+1}-z)&\hbox{for}&k<k_c,\\
\sqrt{\alpha}(z+\sqrt{z^2+1})&\hbox{for}&k>k_c. 
\end{array}\biggr. 
\label{eq:C12}
\end{equation}
From the transformations (\ref{eq:C9a}) we obtain:
\begin{equation}
dz=\frac{\sigma}{\sqrt{4\alpha}}\frac{dk}{k}\left(kr+\frac{\alpha}{kr}\right),
\quad \sqrt{z^2+1}=\frac{1}{\sqrt{4\alpha}}\left(kr+\frac{\alpha}{kr}\right), 
\quad \frac{dz}{\sqrt{z^2+1}}=\sigma
\frac{dk}{k}. \label{eq:C14}
\end{equation}

The Green's function $\psi_G$ in (\ref{eq:C2}) can be written in the form:
\begin{align}
\psi_G=&-\frac{N}{2\pi}\left(\int_0^{k_c}+\int_{k_c}^\infty\right)\frac{dk}{k} J_0(A)\nonumber\\
\equiv&-\frac{N}{2\pi}\left[I(0,k_c)+I(k_c,\infty)\right], \label{eq:C15}
\end{align}
where $I(0,k_c)$ is the integral from $k=0$ to $k=k_c$ and $I(k_c,\infty)$ is the integral over 
the range $k_c<k<\infty$. Using (\ref{eq:C4}) and (\ref{eq:C14}) we obtain:
\begin{align}
I(0,k_c)=&\int_0^\infty \frac{dz}{\sqrt{z^2+1}} J_0\left(\sqrt{4\alpha(z^2+\gamma^2)}\right),\nonumber\\
I(k_c,\infty)=&\int_{0}^\infty \frac{dz}{\sqrt{z^2+1}} J_0\left(\sqrt{4\alpha(z^2+\gamma^2)}\right). 
\label{eq:C16}
\end{align}
Substituting (\ref{eq:C16}) into (\ref{eq:C15}) we obtain:
\begin{equation}
\psi_G=\psi_V, \label{eq:C17}
\end{equation}
which proves that the Green's function $\psi_G$ of (\ref{eq:nowind1}) and (\ref{eq:nowind2}) 
is equivalent to the Green's function $\psi_V$ in (\ref{eq:C1}) obtained by Veronis (1958). 
It is clear from the above analysis, that 
\begin{equation}
\psi_G=-\frac{N}{\pi}\int_0^{k_c} \frac{dk}{k} J_0(A), \label{eq:C18}
\end{equation}
is an alternative formula for $\psi_G$.

\end{document}